\title[Profile mixture model identifiability]{Parameter identifiability for a profile mixture model of protein evolution}
\author{Samaneh Yourdkhani}
\address{Department of Mathematics and Statistics\\
	University of Alaska Fairbanks, 99775}
\email{syourdkhani@alaska.edu}
\author{Elizabeth S. Allman}
\address{Department of Mathematics and Statistics\\
	University of Alaska Fairbanks, 99775}
\email{e.allman@alaska.edu}
\author{John A. Rhodes}
\address{Department of Mathematics and Statistics\\
	University of Alaska Fairbanks, 99775}
\email{j.rhodes@alaska.edu}
\date{June 30, 2020}
\tikzset{leaf/.style={fill=white,inner sep=1pt}}
\tikzset{label/.style={midway,fill=white,inner sep=2pt}}
\tikzset{vertex/.style={draw,circle,fill=black,inner sep = 1pt, minimum size=4pt}}
\theoremstyle{plain}
\newtheorem{theorem}{Theorem}[section]
\newtheorem{lemma}[theorem]{Lemma}
\newtheorem{prop}[theorem]{Proposition}
\newtheorem{definition}[theorem]{Definition}
\newtheorem{cor}[theorem]{Corollary}
\theoremstyle{definition}
\theoremstyle{remark}
\newtheorem{example*}{Example}
\newcommand{\CC}{\mathbb{C}}
\newcommand{\rank}{\mathrm{rank}}
\newcommand{\diag}{\operatorname{diag}}
\newcommand{\pari}{\texttt{Pari/GP}\xspace}
\begin{document}

\ifthenelse{\boolean{submittedVersion}}{\doublespacing}{}

\begin{abstract} 
  A Profile Mixture Model is a model of protein evolution,
  describing sequence data in which sites are assumed to follow
  many related substitution processes on a single evolutionary
  tree.   The processes depend in part on different amino acid 
  distributions, or profiles, varying over sites in aligned sequences.
  A fundamental question for any stochastic model, which must be
  answered positively to justify model-based inference, is whether the
  parameters are identifiable from the probability distribution they
  determine. Here we show that a Profile Mixture Model has
  identifiable parameters under circumstances in which it is likely to be
  used for empirical analyses. In particular, for a tree relating 9 or more
  taxa, both the tree topology and all numerical parameters are
  generically identifiable when the number of profiles is less than
  74.
\end{abstract}

\maketitle

% UNCOMMENT FOR SUBMISSION

\ifthenelse{\boolean{submittedVersion}}{
\vskip 1in

Contact Information:

\

Samaneh Yourdkhani \  (corresponding author)

Department of Mathematics and Statistics

University of Alaska Fairbanks, 99775

syourdkhani@alaska.edu

\bigskip

Elizabeth S. Allman

Department of Mathematics and Statistics

University of Alaska Fairbanks, 99775

e.allman@alaska.edu

907-474-2479

\bigskip

John A. Rhodes

Department of Mathematics and Statistics

University of Alaska Fairbanks, 99775

j.rhodes@alaska.edu

907-474-5445

\newpage
}{}

%%%%%%%%%%%%%%%%%%%%%%%%%%%%%%%%%%%%%%%%%%%%%%%%%%%%%%%%%%%%%%%%%%%%%%%%%%%%%%%%%
%%%%%%%%%%%%%%%%%%%%%%%%%%%%%%%%%%%%%%%%%%%%%%%%%%%%%%%%%%%%%%%%%%%%%%%%%%%%%%%%%
%%%%%%%%%%%%%%%%%%%%%%%%%%%%%%%%%%%%%%%%%%%%%%%%%%%%%%%%%%%%%%%%%%%%%%%%%%%%%%%%%

\section{Introduction} 
A Profile Mixture model is a certain stochastic model of protein sequence evolution that describes the 
changes in sequences along the tree of evolutionary relationships of a collection of taxa. 
Such a model is often used for the inference of the tree from sequence data, using standard maximum 
likelihood or Bayesian statistical frameworks. Here we investigate the question of  \emph{parameter identifiability} 
for this model: Are the model parameters --- both the tree topology and numerical ones --- determined by a site 
pattern distribution arising from the model? Parameter identifiability, 
which informally means that valid parameter inference is possible in
ideal circumstances,
is an essential component of the theoretical 
justification for standard statistical inference approaches.

\medskip

In models of protein sequence generation, amino acid site patterns 
are generally  assumed to be independent and identically
distributed across the sites.  Common continuous-time models of amino acid substitutions are instances of  
the \emph{general time-reversible model} (GTR) which assumes a single rate matrix  $Q$ constant over a 
metric tree, or extensions that allow for additional scalar rate variation at individual
sites.  The rate matrix
$Q$ has off-diagonal entries from $R \, \diag ( \boldsymbol \pi )$, where $R$ is a symmetric matrix of \emph{exchangeabilities} 
and $\boldsymbol \pi$ is a vector of frequencies of the amino acids which remains stable under the model. 

In principle, one can infer $R$, $\boldsymbol \pi$, and a metric tree of taxon relationships from protein 
sequence data using standard statistical frameworks.
However, with 20 amino acids the state space for the model is large, so an exchangeability matrix $R$ is often fixed
in advance, having been previously determined empirically for particular types of data.  Well-known 
exchangeabilities for protein alignments include the JTT
\citep{JTT92},  WAG  \citep{WG01}, and  LG 
\citep{QGL2008} matrices.

When inspecting protein sequence data, however, it is often clear that the GTR assumption of
identically distributed sites is a poor one, since sites have visibly
different amino acid compositions.   Site residue distributions, or \emph{profiles}, 
likely differ because of biophysical properties of amino acids (e.g., hydrophilia,
polarity, or charge), and the associated structural and functional
constraints on the protein.
This phenomenon suggests a model with multiple classes of substitution
processes, and in particular  a mixture model using a variety of profiles with the same exchangeabilities
for all classes.   Mixture models can provide better fit to data as they introduce more parameters, 
though they also increase computational time and may lead to overfitting of the data.

But a more fundamental issue with adopting a mixture model is that one may lose parameter identifiability.
If several choices, or even more worrisome, infinitely many choices of parameters lead to the same
probability distribution under the model, then even with an idealized infinite data set perfectly in accord 
with the model one could not recover the parameter values under which the data arose. Since the goal of most 
phylogenetic analyses is to infer model parameters --- generally the topological tree but often numerical 
parameters as well ---
identifiability is an essential property for a model to be useful. Non-identifiability poses particular 
challenges in Bayesian MCMC analyses, where it may be manifested as a lack of convergence \citep{RanalaIdent}.

For non-mixture site substitution models in phylogenetics parameter identifiability has long been established, but
mixture models provide greater challenges. Although computational work may suggest whether it holds or fails, 
parameter identifiability can only be established theoretically as it is a model property, and not dependent on an 
inference method. In recent years algebraic methods have been introduced and successfully applied to  a number of 
phylogenetic mixture models,
see, for example, \citet{AR2006, ARGMI, Allman2009a, AHRfilter, APRS, ALR2019,Chifman2015,
LS2015, Hollering2019, KWsvdQ_2020}. While one of these works \citep{RS2012} established a  rather general result on 
parameter identifiability of phylogenetic mixture models with many components, it unfortunately does not apply to the 
profile mixture model's specific structure. 

\medskip

In this work, we prove parameter identifiability for a \emph{Profile
Mixture Model (PM)} of amino acid site substitution.  PM models were introduced in the Bayesian context 
\citep{LP04,PhyloBayes, PhyloBayesMPI} where the number of profiles might be inferred 
using a Dirichlet process prior, and as finite mixtures with a fixed number of components
in a Maximum Likelihood analysis \citep{QGL2008}.  Studies suggest
that PM models perform better than single-class models,
particularly on data that is saturated or with an underlying  long branch attraction 
bias \citep{LartillotEtAl2007, WangLiSuskoRoger2008}.
Mixtures with as many as 60 classes 
have been investigated with empirical data sets, with indications that around 20 profiles often provides good fit \citep{QGL2008}.  
For a recent study assessing the performance under
simulation of mixture models including discrete-$\Gamma$ rates-across-sites and 
PM models, see \citet{WangSuskoRoger2014}.

\smallskip
 
Our main result, Theorem \ref{thm:mainThm}, establishes 
that parameters of a profile mixture model with up to 73 classes on a tree of 9 or more taxa, 
are \emph{generically identifiable}; that is, identifiable outside an exceptional parameter
set of measure zero.  For any fixed number of classes, the parameters include the tree topology, 
the tree's edge lengths, the exchangeabilities, the
profiles, and weights of the mixture components.

The proof techniques we employ are algebraic in nature, using ideas from 
tensor decomposition and algebraic geometry.  These tools, which have been introduced and used previously for phylogenetic models \citep{AR2006, Allman2009a,RS2012}, are based in the algebraic properties of matrices and 3-way tensors obtained from rearranging the entries of the
distribution of site pattern frequencies. However, the structure of the PM model, with profiles varying over classes while the exchangeabilities do not,
introduce important differences that prevent any easy deduction of the result from previous work. At several points in our arguments we use exact integer computation,
performed by the software \pari \citep{PARI2}, to establish certain generic conditions we need on ranks of matrices.

As motivated by applications to amino acid models, our main theorem is stated for the profile mixture model with a state
space of size 20. However, the techniques used for establishing 
it apply to arbitrary sizes $\kappa$ of the state space.
For example, $\kappa$ might be $4$ for DNA, or $61$ for codons.  However,
appropriate rank computations would need to be carried out to complete the proof in such contexts.
In the $\kappa=20$ setting we also believe the proof techniques could be pushed to establish identifiability for more than 73 profiles, at the expense of requiring more taxa on the tree.

\smallskip

This paper is organized as follows: In Section \ref{sec:ModelsOnTrees} 
we introduce phylogenetic substitution models, and in particular the profile
mixture model under study. Section \ref{sec:AlgDefs} provides algebraic definitions
and lemmas, though removed from the biological setting of interest. 
Section \ref{sec:AlgPM} then connects the phylogenetic profile mixture model with
these algebraic notions. We conclude in Section \ref{sec:MainThm} with the proof of
our main theorem on identifiability of the PM model parameters.

\section{Markov Models on Trees} \label{sec:ModelsOnTrees}

We begin by introducing Markov models of site substitution along a tree.
Throughout, let $\kappa$ be the size of the state space, which we identify with
$[\kappa ]=\{1,2,3, \dots, \kappa\}$.
For protein data, $\kappa = 20$.  Let $T^\rho$ be a rooted 
topological tree, with root $\rho$ and leaves labelled by elements of the taxon set $X$.  
The \emph{general Markov model} of $\kappa$-state sequence evolution 
along $T^\rho$ is parameterized by 1) A $1 \times \kappa$
vector $\boldsymbol{\pi}$ giving the distribution
of states at the root; and 2) for each edge $e$ directed away from the
root, a $\kappa \times \kappa$ Markov matrix $M^e$ giving the conditional probabilities of state transitions along $e$.
These determine the expected site pattern frequency array, or joint distribution of states at the leaves,
which we view as a $\underbrace{\kappa \times \kappa \times\dots \times \kappa}_n$ array or tensor,
$P$.
Each site in an alignment is modeled as independent and identically 
distributed according to $P$.

A subclass of general Markov models is composed of the \emph{general time-reversible models (GTR)}.
For a GTR model, there is an single underlying rate matrix $Q$, and for each edge $e$ of $T^\rho$ a length $t_e$ with $M^e = \exp(Qt_e)$.  Time-reversibility is the assumption that for some symmetric $\kappa \times \kappa$ matrix $R$ of 
non-negative \emph{exchangeabilities} and the root distribution $\boldsymbol \pi$  the 
diagonal entries of $Q$ are those of the product $R \diag(\boldsymbol \pi)$, with the 
diagonal entries chosen so that row sums are zero.  This results in
$\diag(\boldsymbol{\pi})Q=Q^T\diag(\boldsymbol{\pi})$.
One consequence of time-reversibility is that the Markov matrix $M^e$ is independent 
of the direction of $e$.  It follows that the tree parameter in a GTR model is 
\emph{de facto} unrooted since the location of the root is not identifiable.  We repeatedly take advantage of this to `move the
root' to locations in $T$ convenient for our arguments.

Profile mixture models are finite mixtures of GTR models, where the underlying exchangeability matrix $R$
is the same for each class.  The particular profile mixture model examined here has parameters
as follows.  

\begin{definition}\label{def:PM}
	Let $T$ be a rooted topological tree, $\kappa\ge 2$ a number of states, and $m\ge 1$ a number of classes. 
	Then the numerical parameters of the \emph{Profile Mixture Model} on $T$, PM=PM $(T, \kappa, m)$, are:
	\begin{enumerate}
		\item a collection of non-negative branch lengths $\{t_e\}$, one for each edge $e$ of $T$;
		\item a symmetric $\kappa\times\kappa$ matrix $R$ of non-negative exchangeabilities;
		\item a collection of $m$ class weights $\{w_i\}$, with $w_i>0$ and $\sum w_i=1$; and
		\item For each class $i = 1,2, \dots, m$,
		\begin{enumerate}
			\item[$-$] a $1 \times \kappa$ root distribution vector $\boldsymbol{\pi}_i$, called a \emph{profile}; and
			\item[$-$] a scalar rate parameter $r_i\geq 0$.		
		\end{enumerate}

	\end{enumerate}  
\end{definition}

The scalar rate parameters $\{r_i \}$ are used to incorporate across-site rate variation into the PM model.
Specifically, for class $i$ with $Q_i$ the rate matrix  determined by $R$, $\boldsymbol{\pi}_i$, the Markov matrix on edge $e$ in $T$ is $M_i^e=\exp(r_iQ_it_e)$.  
We note that site rate variation for PM models
may be implemented differently in software, with a rate for each site
\citep{LP04} or with a discrete-$\Gamma(4)$ \citep{QGL2008}.  In the first implementation, the PM
model is very likely overparameterized and ideally the MCMC would limit the number
of rate multipliers.  Implementation of the rate variation using a discrete-$\Gamma$
has a long history in computation phylogenetics \citep{YangRAS1994}, but proofs
of such rate variation identifiability are only known for the continuous $\Gamma$
\citep{AllmanAneRhodes07, Chai2011}. 

While probability distributions from mixture models are often described as weighted sums of distributions from the various classes,
phylogenetic mixture models can be equivalently presented as a single model on a tree $T$
with $m \kappa$ states at internal nodes of $T$, and $\kappa$ states at the leaves.  The internal states are pairs $(i,j)$ where $i$ is a class and $j\in [\kappa]$ is 
a `usual' state.
In this
formulation, Markov matrices on internal edges $e$ for the PM model are $m \kappa \times
m \kappa$ block diagonal matrices, where the the $m$ blocks are the $M_i^e$, $i = 1, \dots, m$. The block structure prevents changes from one class to another, though the `usual' states may change within the class.
For the terminal edges $e$ of $T$, leading to leaves where the class information is not observable,
the PM Markov matrix for an edge is formed by stacking the $m$ Markov matrices $M^e_i$ for the
classes. The root distribution is an $m\kappa$ vector formed by concatenating  $w_i\boldsymbol \pi_i$ for the classes.

We collect these observations for parameterizing the PM model on a tree.

\begin{definition}\label{def:PM_lambda_kappa}  Given 
	parameters for the profile mixture model $PM(T, \kappa, m)$, assume
	that $T$ is rooted at $r$.   Then the $1 \times m \kappa$ vector $\boldsymbol \Pi= \boldsymbol
\Pi_r = (w_1\boldsymbol{\pi}_1, \, w_2\boldsymbol{\pi}_2, \, \dots, \, w_{\kappa}\boldsymbol{\pi}_{\kappa})$,
the $m \kappa \times m \kappa$ matrices $M^e = exp( Q t_e )$ where $Q$ is block diagonal with blocks
$r_i Q_i$ for each internal edge $e$ of length $t_e$, and the $m \kappa \times \kappa$ matrix $M^e$ formed by
stacking the matrices $M_i^e$ for each class $i$ on a terminal edge
give a parameterization of the PM model as a Markov model of site substitution on $T$.
\end{definition}

Since our main goal is to prove parameter identifiability for the PM model, we formally define
the notion of generic identifiability.

\begin{definition}\label{def:ID}
	Consider a parametric model, specified by a parameterization map $\phi$ from some parameter space to a space of probability distributions. If $\phi$ is one-to-one, then the model parameters are \emph{identifiable}. If $\phi$ is 
	one-to-one except possibly on a subset of measure zero in the parameter space, then the model parameters are \emph{generically identifiable}. 
\end{definition}

It is well known that for the GTR model some normalization is needed for rates and branch lengths since 
$Q t = (s Q) \left(\frac{t}{s} \right)$ shows rescaling all rates in $Q$ can be offset by decreasing branch lengths.  
Once understood and addressed, this model 
overparameterization, or lack of identifiability, is of little consequence.
Typically, the rate matrix $Q$ is normalized so that branch lengths are measured 
in expected number of substitutions per site over the elapsed time.  In the strictest sense, only the normalized
variant of the GTR model has identifiable parameters, a result used in our proof of the main theorem.
\begin{theorem} \label{thm:GTRid} 
	For a single class GTR model on an unrooted metric tree, the tree topology and all numerical 
	parameters are generically identifiable, up to a normalization of $Q$.
\end{theorem}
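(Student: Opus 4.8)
The plan is to work one leaf-pair at a time, using time-reversibility to reduce the whole question to properties of the pairwise state distributions. Fix GTR parameters: an unrooted topology $T$ on a taxon set $X$, edge lengths $\{t_e\}$, exchangeabilities $R$, and a frequency vector $\boldsymbol{\pi}$, normalized so that $-\sum_i \pi_i q_{ii} = 1$ where $Q = R\,\diag(\boldsymbol{\pi})$ with diagonal chosen for zero row sums. I would argue over the generic set on which $\boldsymbol{\pi}$ is strictly positive, every $t_e$ is positive, and $Q$ has $\kappa$ distinct eigenvalues; each of these failures is confined to a proper algebraic subset of parameter space, hence to a set of measure zero.

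First I would read off $\boldsymbol{\pi}$: since $\boldsymbol{\pi} Q = 0$, the vector $\boldsymbol{\pi}$ is stationary, so the marginal of $P$ at \emph{every} leaf is exactly $\boldsymbol{\pi}$. Next, for each pair of leaves $a,b$, form the $\kappa\times\kappa$ matrix $P^{ab}$ of joint leaf probabilities by marginalizing $P$. Rooting $T$ at $a$ (legitimate by the de facto unrooted property) and using that a reversible Markov matrix does not depend on edge direction, one gets
\[
P^{ab} = \diag(\boldsymbol{\pi})\, M^{[a,b]}, \qquad M^{[a,b]} = \prod_{e\in\mathrm{path}(a,b)}\exp(Q t_e) = \exp\!\bigl(Q\, d_{ab}\bigr),
\]
where $d_{ab} = \sum_{e\in\mathrm{path}(a,b)} t_e$ and the product collapses because the factors $\exp(Q t_e)$ all commute. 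Hence $A_{ab} := \diag(\boldsymbol{\pi})^{-1} P^{ab} = \exp(Q\, d_{ab})$ is determined by the data. Because $Q$ is reversible, $\diag(\boldsymbol{\pi})^{1/2} Q\, \diag(\boldsymbol{\pi})^{-1/2}$ is symmetric, so $Q$ has real eigenvalues, generically distinct; then $A_{ab}$ is diagonalizable with distinct \emph{positive} eigenvalues, and therefore has a unique real logarithm, which is therefore $Q\, d_{ab}$. So $Q\, d_{ab}$ is identifiable, and the chosen normalization gives $d_{ab} = -\sum_i \pi_i (Q\, d_{ab})_{ii}$ and then $Q = (Q\, d_{ab})/d_{ab}$ --- the same matrix for every pair, which serves as an automatic consistency check.

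It remains to reassemble the tree and the last parameters. By construction the numbers $\{d_{ab}\}$ are the path-length distances of an additive metric on $T$ with strictly positive edge weights, so the topology $T$ is recovered by the four-point condition (alternatively, one may invoke generic identifiability of the topology for the general Markov model, of which GTR is a submodel), and, with $T$ in hand, the edge lengths $\{t_e\}$ are the unique non-negative solution of the linear system relating them to $\{d_{ab}\}$. Finally, from $Q = R\,\diag(\boldsymbol{\pi})$ with $\boldsymbol{\pi}$ already known and positive, the off-diagonal entries $R_{ij} = Q_{ij}/\pi_j$ are determined, which specifies all of $R$ (its diagonal being irrelevant to the model). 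I expect the one genuinely delicate point to be the separation of $Q$ from the branch lengths inside the matrices $A_{ab}$: it is exactly there that one needs both the normalization of $Q$ and the uniqueness of the real matrix logarithm, the latter being what forces the genericity hypothesis that $Q$ has distinct eigenvalues.
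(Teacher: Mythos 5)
Your proof is sound, but note that the paper itself offers no proof of this theorem: it is quoted as a known background fact about the (normalized) GTR model, used as an ingredient in the proof of the main theorem. Your argument is the standard one from the literature for such results: recover $\boldsymbol{\pi}$ from leaf marginals, recover $\exp(Q\,d_{ab})=\diag(\boldsymbol{\pi})^{-1}P^{ab}$ for each leaf pair, take the (unique, by distinct positive eigenvalues and reality of the reversible spectrum) real matrix logarithm to separate $Q$ from the path length via the normalization, then rebuild the topology and edge lengths from the additive distances $\{d_{ab}\}$ by the four-point condition, and finally read off $R$ from $Q$ and $\boldsymbol{\pi}$. The delicate points --- that any real logarithm of a real diagonalizable matrix with distinct positive eigenvalues must be the principal one (it commutes with the matrix, hence is a polynomial in it), that $d_{ab}>0$ so the division is legitimate, and that the exceptional sets (some $\pi_i=0$, some $t_e=0$, repeated eigenvalues of $Q$) are proper algebraic subsets --- are all addressed or easily filled in, so the argument stands as a legitimate self-contained proof of the cited result.
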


\section{Algebraic Definitions and Lemmas} \label{sec:AlgDefs}

In this section we collect  algebraic definitions and theorems that will play a role in our 
analysis of the PM model. We present these in a purely algebraic setting, deferring the 
connection to the phylogenetic models, and in particular the PM model, to later sections.
We begin by defining tensors and certain algebraic operations on them leading up to a theorem of J. Kruskal
on the structure of 3-way tensors, an important tool that we will use several times. 
We then briefly introduce algebraic varieties and conclude by stating a theorem for 
identifying generic properties, a tool also used repeatedly in our proofs.

\subsection{Tensors}

Our first definition is a standard one.

\begin{definition}\label{Def3.6}
	Let $A$ be an $m\times k$ matrix and $B$ be an $n\times l$ matrix. 
	The \emph{tensor}, or \emph{Kronecker, product} $A\otimes B$ is the $mn\times kl$ 
	matrix whose rows are indexed by the ordered pair $(i_1,j_1)$, $i_1\in[m], j_1\in[n]$ 
	and whose columns are indexed by ordered pair $(i_2,j_2)$, $i_2\in[k], j_2\in[l]$   
	such that the $((i_1,j_1),(i_2,j_2))$ entry is
	$\left(A\otimes B\right)_{(i_1,j_1),(i_2,j_2)}=a_{i_1 i_2}b_{j_1 j_2}.
	$
\end{definition}

Less standard is the following.

\begin{definition} \label{def:row-tensor-products} \label{def:rowTensorProduct}
	Let $A$ be an $m\times c_1$ matrix and $B$ be an $m\times c_2$ matrix. The 
	\emph{row tensor product} $A \otimes_r B$ is the $m\times c_1 c_2$ matrix
	with entries indexed by 
	$(i,(j,k))$ for $i \in [m], j \in [c_1], k \in [c_2]$,
	$$
	\left(A\otimes_r B\right)_{i,(j,k)}=a_{ij}b_{ik}.
	$$ 
	In the case that $A = B$ and $\ell$ is a positive integer, then the 
	$\ell^{\text{th}}$ \emph{row-tensor power} of $A$ is the $m\times k^{\ell}$ matrix
	$
	A^{\otimes_r^{\ell}}=\underbrace{A\otimes_r A\otimes_r \dots \otimes_r A}_{\ell}.
	$ 
\end{definition}
We do not specify the precise order of row and column indices in these tensor products, 
since for our applications it will either be clear from context, or inconsequential.
In particular, we often only need results on the ranks of these products, which are independent of row and column ordering.

\smallskip

Since Kruskal's Theorem concerns 3-way tensors, we next describe reformatting
$n$-way tensors into $3$-way ones.  Suppose $P$ is an $n$-way tensor
with indices labeled by $X$.  Then a \emph{tripartition} $I | J | K$ of $X$ 
is a collection three disjoint non-empty subsets of $X$ whose union is $X$, $X = I \sqcup J \sqcup K$.
A \emph{bipartition of $X$}, or a \emph{split}, is defined similarly, with the disjoint sets required to be non-empty.

\begin{definition} \label{def:flattening}  
	Let $A$ be an $n$-way $\kappa \times \dots \times \kappa$ tensor with $I|J$ a split of the index set $X$. 
Then the \emph{matrix flattening} of $A$ with respect to $I,J$, denoted $\operatorname{Flat}_{I|J}(A)$, 
is a $\kappa^{|I|}\times \kappa^{|J|}$ matrix. If, by permuting indices, we assume that 
$I=\{1,2,\cdots, |I|\}$, $J=\{|I|+1,\cdots, n\}$, 
then the $(\bf i,\bf j)$-entry is
	$$
	\left(\operatorname{Flat}_{I|J}(A)\right)_{\bf i,\bf j}=
	A(i_1, \dots, i_{\vert I \vert}, \, j_1, \dots, j_{\vert J \vert}),
	$$
	for $\mathbf i = (i_1,\dots,i_{|I|})$ and $\mathbf j = (j_1,\dots,j_{|J|} )$.

Similarly for a tripartition $I|J|K$ of $X$, the $3$-way tensor $\operatorname{Flat}_{I|J|K}(A)$ is
$$
\left(\operatorname{Flat}_{I|J|K}(A)\right)_{\mathbf i,\mathbf j,\mathbf k}=A(\mathbf i,\mathbf j,\mathbf k),
$$ 
where $\mathbf i\in [\kappa]^{|I|}$, $\mathbf j \in [\kappa]^{|J|}$, and $\mathbf k\in [\kappa]^{|K|}$.
\end{definition}

\begin{example*}
	Suppose $A$ is a $20\times20\times20\times20\times20\times20$ $6$-way tensor, and 
	let $I=\{1,3\}$, $J=\{4\}$, and $K=\{2,5,6\}$. Then $\operatorname{Flat}_{I|J|K}(A)$ 
	is a $20^2\times20\times20^3$ tensor with, for example,
	$$
	\left(\operatorname{Flat}_{I|J|K}(A)\right)_{(10,12),(8),(15,16,18)}=A(10,15,12,8,16,18).
	$$
\end{example*}

Kruskal's theorem requires the notion of a $3$-way tensor obtained as sum of  ``outer products" 
of the rows of $3$ matrices.

\begin{definition} \label{def:3wayProd}  
	Let $A$ be a $k \times n_A$ matrix with $i^{th}$ row $r_{i}^A=\left(r_{i}^A(1),\cdots,r_{i}^A( n_A)\right)$, and
	similarly for matrices $B$ and $C$  of size $k\times n_B$ and $k\times n_C$ respectively. Then $[A,B,C]$ 
	denotes the $3$-way  $n_A\times n_B\times n_C$ tensor
	$$[A,B,C]=\sum_{i=1}^{k} r_{i}^A\otimes r_{i}^B \otimes r_{i}^C,
	$$
	where the tensor products in the summands are formatted to preserve an index
	for each matrix.  For instance, $r_1^A \otimes r_1^B = (r_1^A)^T \cdot r_1^B$
	is $n_A \times n_B$, where $T$ denotes the transpose.
\end{definition}

To illustrate, suppose that $A, B, C$ are $2\times 2$, $2\times 3$, and $2\times 4$ matrices respectively,
	$$
	A=\begin{pmatrix}
	1&2\\
	3&4
	\end{pmatrix}, \ \ \ B=\begin{pmatrix}
	1&2&3\\
	4&5&6
	\end{pmatrix}, \ \ \ C=\begin{pmatrix}
	1&2&3&4\\
	5&6&7&8
	\end{pmatrix}.
	$$
Then $P=[A,B,C]$ is the $2\times3\times4$ tensor with slices with respect
to the $C$ index given by
	\begin{align*}
	P(\cdot,\cdot,1)&=\begin{pmatrix}
	61&77&93\\
	82&104&126\\
	\end{pmatrix}, \ \ \ P(\cdot,\cdot,2)=\begin{pmatrix}
	74&94&114\\
	100&128&156\\
	\end{pmatrix},\\
	P(\cdot,\cdot,3)&=\begin{pmatrix}
	87&111&135\\
	118&152&186\\
	\end{pmatrix}, \ \ \ P(\cdot,\cdot,4)=\begin{pmatrix}
	100&128&156\\
	136&176&216\\
	\end{pmatrix}.
\end{align*}

As a simple extension of Definition \ref{def:3wayProd} for use with
phylogenetic models, we write $$[\boldsymbol{\pi};\, A,B,C] 
= [\diag(\boldsymbol \pi) A, B, C ] = 
\sum_{i=1}^{k} \pi_i \, r_{i}^A \otimes r_{i}^B \otimes r_{i}^C,$$
where $\boldsymbol{\pi}=(\pi_1,\pi_2,\cdots,\pi_k).$

Before the stating Kruskal's Theorem, we need the following. 
\begin{definition}   \label{def:KruskalRowRank}   
	Let $A$ be a matrix. The \emph{Kruskal (row) rank} of a matrix
	$A$ is the largest number $k$ such that every set of $k$ rows of $A$ are independent.
\end{definition}

For example, letting $V$ denote
the set of all ${3\times3}$ matrices, a set of dimension $9$, 
consider matrices 	of the form

\begin{equation}\label{eq:Krank}
\begin{pmatrix}
	a&b&c\\ 
	a&b&c\\
	d&e&f
\end{pmatrix},
\end{equation}
	where $(a,b,c), (d,e,f)$ are  
	independent.
	These matrices have rank 2 but  Kruskal rank $1$,  and form a subset of lower dimension 
	inside the $9$-dimensional space $V$.

It is clear that Kruskal rank is less than or equal to matrix rank, but 
when a matrix has full row rank, the two notions coincide.  In subsequent sections,
we exploit this observation by creating matrices with full row rank and therefore full Kruskal rank.

\smallskip

Kruskal's theorem can be viewed as a generic identifiability theorem for 3-way arrays,
showing that triple products satisfying a particular rank condition are decomposable
in essentially a unique way.

\begin{theorem}[\cite{Kruskal77}] \label{thm:Kruskal}  
	Let $A, B, C$ be $l \times n_A$, $l \times n_B$, and $l\times n_C$ matrices 
	with Kruskal rank $p, q, r$ respectively. If
	\begin{equation} \label{eq:Kruskal} 
	p+q+r \, \geq \, 2l+2,
	\end{equation}
	then $A, B, C$ are uniquely determined by $[A, B, C]$, up to simultaneous permutation
	and scaling of their rows. More precisely, if $[A, B, C]=[A', B', C']$ then there exist 
	invertible diagonal matrices $D_1, D_2$ and a permutation matrix $P$ such that 
	$$A'=PD_1A, \ \ \  B'=PD_2B, \ \ \ C'=PD_1^{-1}D_2^{-1}C.
	$$
\end{theorem}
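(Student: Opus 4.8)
This is Kruskal's celebrated uniqueness theorem for three-way tensors, and in the paper one may simply invoke \cite{Kruskal77}; the proof I would give proceeds as follows. Set $P=[A,B,C]$ and study it through its \emph{pencil of frontal slices}. Viewing $\mathbf w\in\RR^{n_C}$ as a column vector, the weighted slice sum $\sum_{k=1}^{n_C} w_k\, P(\cdot,\cdot,k)$ is the $n_A\times n_B$ matrix
\[
\Phi(\mathbf w)\ =\ A^{T}\diag(C\mathbf w)\,B\ =\ \sum_{i=1}^{l}(C\mathbf w)_i\,(r_i^A)^{T}\,r_i^B ,
\]
so the hypothesis $[A,B,C]=[A',B',C']$ says exactly that $\Phi(\mathbf w)=(A')^{T}\diag(C'\mathbf w)\,B'$ for every $\mathbf w$. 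The goal is to show that, after a single common permutation and rescaling of the $l$ rows, $A'=A$, $B'=B$, and $C'=C$, with the three scalings attached to each matched triple of rows multiplying to $1$. We assume throughout that the primed decomposition also has $l$ terms, the general case reducing to this.

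The engine is Kruskal's rank inequality. Writing $\omega(\mathbf d)$ for the number of nonzero entries of a vector $\mathbf d$ and restricting $A^{T}\diag(\mathbf d)B$ to the rows indexed by $\operatorname{supp}(\mathbf d)$, an application of Sylvester's rank inequality to the middle factor, together with the fact that any set of at most $p$ rows of $A$ and any set of at most $q$ rows of $B$ is linearly independent, yields
\[
\min\!\bigl(\omega(\mathbf d),p\bigr)+\min\!\bigl(\omega(\mathbf d),q\bigr)-\omega(\mathbf d)\ \leq\ \rank\!\bigl(A^{T}\diag(\mathbf d)\,B\bigr)\ \leq\ \omega(\mathbf d).
\]
Moreover, because any set of at most $r$ rows of $C$ is independent, for each index set $Z\subseteq[l]$ with $|Z|\leq r-1$ one can choose $\mathbf w$ with $\operatorname{supp}(C\mathbf w)=[l]\setminus Z$, so the pencil realizes every support size in $\{\,l-r+1,\dots,l\,\}$. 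Applying the displayed bounds to both sides of $\Phi(\mathbf w)=(A')^{T}\diag(C'\mathbf w)B'$ and feeding in the hypothesis $p+q+r\geq 2l+2$ produces the key structural fact: for every weight vector $\mathbf v$ one has $\omega(A'\mathbf v)\leq\omega(A\mathbf v)$, and symmetrically for $B$ and $C$. This is precisely where the exact threshold $2l+2$ is consumed.

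The heart of the proof, and the step I expect to be the main obstacle, is Kruskal's \emph{Permutation Lemma}: if $M$ and $M'$ have the same number of columns, $M'$ has no zero row, and $\omega(M'\mathbf v)\leq\omega(M\mathbf v)$ for every $\mathbf v$, then $M'=\Pi D\,M$ for a permutation matrix $\Pi$ and an invertible diagonal matrix $D$. I would prove it by an inductive counting argument on the incidence between the families of zero-sets of the vectors $M\mathbf v$ and $M'\mathbf v$; this combinatorial core is the genuinely delicate part, and handling it carefully is why a hypothesis of exactly the shape $p+q+r\geq 2l+2$ is needed. Granting the lemma and applying it to the pair $(A,A')$ gives $A'=PD_1A$; substituting back into the pencil identity and running the analogous argument with the roles of the three factors permuted gives $B'=P'D_2B$ and $C'=P''D_3C$. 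A final consistency check---again via the rank inequality, which rules out any mismatch among $P$, $P'$, $P''$---forces $P=P'=P''$, and then equating the two decompositions term by term gives $D_1D_2D_3=I$, that is $C'=PD_1^{-1}D_2^{-1}C$, completing the proof.
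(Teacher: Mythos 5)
The paper itself gives no proof of this statement: it is imported directly from Kruskal (1977) and used as a black box, so for the paper's purposes invoking the citation, as you note at the outset, is all that is required. Your sketch does follow the architecture of Kruskal's own argument --- the pencil of slices $A^{T}\diag(C\mathbf w)B$, the rank bound for $A^{T}\diag(\mathbf d)B$ in terms of Kruskal ranks, a permutation lemma, and a final consistency step --- so the strategy is the right one.

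Judged as a proof, however, the load-bearing middle step is not correct as stated. Nothing is assumed about the Kruskal ranks of $A',B',C'$, so the two displayed bounds can only be applied asymmetrically: the lower bound goes on the unprimed side and the upper bound $\rank\bigl((A')^{T}\diag(C'\mathbf w)B'\bigr)\le\omega(C'\mathbf w)$ on the primed side, which yields
$\omega(C'\mathbf w)\ \ge\ \min\bigl(\omega(C\mathbf w),p\bigr)+\min\bigl(\omega(C\mathbf w),q\bigr)-\omega(C\mathbf w)$,
hence $\omega(C'\mathbf w)\ge\omega(C\mathbf w)$ only when $\omega(C\mathbf w)\le\min(p,q)$. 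This is the \emph{opposite} inequality direction from your ``key structural fact'' and it holds only on a restricted set of weight vectors, not ``for every weight vector $\mathbf v$''; the reverse, universal inequality you assert cannot be extracted from the tensor equality before the theorem is already proved. Consequently the permutation lemma cannot be used in the simplified all-$\mathbf v$ form you state (that form is comparatively easy --- a generic-hyperplane count settles it --- precisely because its hypothesis is so strong, but that hypothesis is unverifiable here). Kruskal's actual lemma assumes the $\omega$-comparison, in the direction $\omega(M\mathbf x)\le\omega(M'\mathbf x)$, only for vectors $\mathbf x$ with $\omega(M'\mathbf x)\le l-\rank(M)+1$, and both the bridge from the derivable restricted inequality to that hypothesis and the lemma's own inductive proof are where $p+q+r\ge 2l+2$ is genuinely consumed; you leave all of this, as well as the final step forcing a common permutation and $D_1D_2D_3=I$, to assertion. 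So what you have is an outline of the correct route with its hardest links missing or misstated; relying on the citation, as the paper does, remains the appropriate course.
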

By way of contrast, note that for two compatible matrices $A$, $B$, the natural analog of the 
bracket product is the matrix product $[A,B]=A^T B$. However, from $[A,B]$, $A$ and $B$ 
can \emph{not} be determined uniquely, since there are many matrix products that give the same result.
For instance, $A^T B=(QA)^T (QB)$ for any orthogonal matrix $Q$. Kruskal's theorem thus states 
a significant difference between matrices and $3$-way tensors.

\subsection{Generic points in parameter space}
Algebraic geometry provides a convenient tool for understanding exceptional sets,
like those that fail to satisfy the rank conditions necessary to apply Kruskal's
Theorem.  We briefly give the needed definitions.

\begin{definition} \label{def:variety} 
	Let $S$ be a finite set of polynomials in $\mathbb{C}[x_1,\dots,x_n]$. 
	The common zero set in $\mathbb{C}^n$ of the polynomials in $S$ is 
	the \emph{algebraic variety} $V(S)$. 
        A subset of a variety that is itself a variety is called a \emph{subvariety}.  
       For any algebraic variety $V(S) \subseteq \CC^n$, the \emph{ideal $I(V(S))$} 
	is the set of all 
	polynomials $f\in \mathbb{C}[x_1, \dots, x_n]$ such that $f(v)=0$ for all $v\in V(S)$. 

\end{definition}

The main result of this work is that PM model parameters are identifiable except for 
`rare' choices. This is expressed using the following terminology. 
\begin{definition} \label{def:generic}  
	A property is \emph{generic} on a full-dimensional subset $W$ of $\mathbb R^n$ or $\mathbb C^n$ 
	if it holds at all points of $W$ except possibly for those points in some subset $U \subset W$
	of measure 0. If $V$ is an algebraic variety in $\mathbb C^n$, we say a property is \emph{generic} 
	on $V$ if it holds at all points except those in a proper subvariety of $V$.    
\end{definition}

Note that proper subvarieties of varieties always have measure $0$, so these notions of 
generic are consistent with one another.

\begin{example*}
	The set of $3 \times 3$ matrices forms a variety $V(S)$ with $S=\{0\}$. The property of having rank, 
or equivalently Kruskal rank, $3$ is generic on $V$, since matrices of rank at most $2$, including those of the form
\eqref{eq:Krank}, lie in a finite union of lower dimensional sets. This subvariety of exceptional matrices is 
defined by a single polynomial, the $3\times3$ determinant. 
\end{example*}

A fundamental tool for drawing conclusions that model parameters are generically identifiable 
is the following variant of a proposition in \cite{RS2012}, which we use repeatedly.

\begin{prop} \label{prop:genericProp} 
	Let $\Phi:U \to \mathbb C^n$ be an complex analytic map with $U$ an open subset of $\mathbb C^\ell$.
	Let $V$ be a variety in $\mathbb C^n$. Suppose $f\in I(V)$, and that there exists a point 
	$p_1=\Phi(u_1)$ with $f(p_1)\neq 0$. Then for generic points $u\in U$ or $u\in U\cap\mathbb R^n$, 
	the point $\Phi(u)$ lies off of $V$.
\end{prop}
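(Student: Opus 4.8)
\emph{Proof strategy.} The plan is to reduce the statement to a fact about a single scalar function, namely $g=f\circ\Phi\colon U\to\CC$. Since $f$ is a polynomial and $\Phi$ is complex analytic, $g$ is complex analytic on $U$. Set $Z=g^{-1}(0)\subseteq U$. Because $f\in I(V)$ vanishes identically on $V$, every $u\in U$ with $\Phi(u)\in V$ satisfies $g(u)=f(\Phi(u))=0$, so
\[
\{\,u\in U\mid \Phi(u)\in V\,\}\ \subseteq\ Z .
\]
The hypothesis $f(p_1)=f(\Phi(u_1))\neq 0$ says exactly that $g(u_1)\neq 0$, so $g$ is not the zero function. (I will assume $U$ is connected, which holds in all our applications and is implicit in \cite{RS2012}; otherwise the statement can fail, e.g.\ on a component of $U$ that $\Phi$ maps entirely into $V$.) It therefore suffices to show that $Z$ has Lebesgue measure zero in $U$, and that $Z\cap\RR^\ell$ has Lebesgue measure zero in $U\cap\RR^\ell$.

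For the complex conclusion I would invoke the standard fact that the zero set of a holomorphic function $g\not\equiv 0$ on a connected open $U\subseteq\CC^\ell$ is nowhere dense and has measure zero in $\CC^\ell\cong\RR^{2\ell}$. This follows by induction on $\ell$: the base case is the one-variable identity theorem (the zeros of a nonzero one-variable holomorphic function are isolated), and the inductive step restricts $g$ to complex lines in a fixed coordinate direction and applies Fubini. By Definition~\ref{def:generic}, this is precisely the assertion that $\Phi(u)$ lies off $V$ for generic $u\in U$.

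For the real conclusion I would restrict $g$ to $U\cap\RR^\ell$, obtaining a real-analytic function. The crucial point is that $g$ cannot vanish on any nonempty open subset $W$ of $U\cap\RR^\ell$: since $\RR^\ell$ is a totally real submanifold of $\CC^\ell$ of full real dimension, at a point $u_0\in W$ every iterated complex partial derivative of $g$ coincides with the corresponding iterated real partial derivative, which vanishes; hence the convergent Taylor series of $g$ at $u_0$ is identically zero, and the identity theorem forces $g\equiv 0$ on the connected set $U$, contradicting $g(u_1)\neq 0$. Thus $g$ restricted to $U\cap\RR^\ell$ is real-analytic and not identically zero on any connected component, so $Z\cap\RR^\ell$ has measure zero there by the analogous standard fact for real-analytic functions. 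Hence $\Phi(u)$ lies off $V$ for generic $u\in U\cap\RR^\ell$ as well.

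The one delicate step will be this interface between the holomorphic structure on $U$ and its real slice $U\cap\RR^\ell$: a set that is measure zero in $\CC^\ell$ could a priori contain an open piece of $\RR^\ell$ (witness the zero set of $\operatorname{Im}(z_1)$), and what rules this out is exactly the holomorphy of $g$, via the identity-theorem argument above. This is the only place where holomorphicity of $\Phi$ --- rather than mere real-analyticity or smoothness --- is essential; the remaining ingredients, the containment coming from $f\in I(V)$ and the measure-zero facts for (real and complex) analytic functions, are routine.
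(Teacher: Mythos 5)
Your argument is correct and is essentially the paper's own proof, just with the standard facts spelled out: the paper likewise forms $f\circ\Phi$, notes it is analytic and not identically zero, and cites Range for the measure-zero property of its (complex and real) zero set. Your added caveat that $U$ should be connected (or that $f\circ\Phi$ not vanish identically on any component) is a reasonable refinement that the paper leaves implicit, and your identity-theorem argument for the real slice is exactly the content behind the paper's one-line appeal to ``the real points in the zero set must similarly have measure zero.''
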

\begin{proof} This follows from basic properties of complex analytic functions of many variables 
	(see, for instance, the text by \citet{Range1986}).
The function $f \circ\Phi$ is analytic, and not identically zero. Its zero set is therefore of 
measure zero, so for generic $u\in U$, $\Phi(u)$ lies off
$V(f)\supseteq V$. The real points in the zero set must similarly have measure zero.
\end{proof}

\subsection{Rank Propositions} For the proof of our main theorem,
the ranks and Kruskal ranks of some special matrices arising in the PM model are needed,
and we compile these rank computations here.  By giving these algebraic results in advance, 
the proof of Theorem \ref{thm:mainThm} can be presented more cleanly. 
Note that our arguments depend in part on some computations that were performed 
with the software \pari. As these computations were performed using exact integer arithmetic, 
they may be taken as valid proofs, up to the usual assumptions of correct programming and 
no hardware faults.

\smallskip

We begin by defining a particular structured matrix that can arise from particular
parameter choices for the PM model.

\begin{definition} \label{def:specialMarkovMatrix} 
	With $a_i\in \mathbb C$ for $i\in [\kappa]$, and $s=a_1+\dots+a_\kappa$,  
	let $M(a_1,\dots,a_\kappa)$ denote the $\kappa \times \kappa$ matrix 
	\begin{equation}\label{eq:specialMatrix}
	M(a_1,\dots,a_\kappa)=\begin{pmatrix}
	1+a_1-s&a_2&\cdots&a_{\kappa}\\
	a_1&1+a_2-s&\cdots&a_\kappa\\
	\vdots& \ddots& &\vdots \\
	a_1&a_2&\cdots&1+a_{\kappa}-s
	\end{pmatrix}.
	\end{equation}
\end{definition}

\begin{prop} \label{prop:rankTerminal} 
	For $\kappa=20$ and $m \leq 77$, let $M$ be a $m\kappa\times\kappa$ matrix formed by stacking $m\ge 1$  
	choices of matrices of the form  $M(a_1,\dots,a_\kappa)$.  Then 
	$M^{\otimes_r^{\ell}}$ has full row rank for  generic choices of the $a_i$ when $\ell\geq3$.
\end{prop}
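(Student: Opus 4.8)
The plan is to reduce the claim to a single generic (non-vanishing polynomial) statement that can be checked at one explicit parameter point, invoking Proposition \ref{prop:genericProp}. The entries of $M(a_1,\dots,a_\kappa)$ are polynomials (indeed affine-linear) in the $a_i$, so the entries of each stacked block are polynomials in the $20m$ parameters $a_i^{(k)}$, $k\in[m]$, $i\in[\kappa]$, and hence the entries of $M^{\otimes_r^\ell}$ are polynomials in these parameters. The matrix $M^{\otimes_r^\ell}$ is $m\kappa\times\kappa^\ell$, so ``full row rank'' means rank $m\kappa$, which for $\ell\ge 3$ and $m\le 77$, $\kappa=20$ is possible since $m\kappa = 20m \le 1540 \le 20^3 = \kappa^3 \le \kappa^\ell$. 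Full row rank of $M^{\otimes_r^\ell}$ holds if and only if some $m\kappa\times m\kappa$ minor is nonzero; each such minor is a polynomial $f$ in the $a_i^{(k)}$. So it suffices to exhibit one choice of the $a_i^{(k)}$ at which $M^{\otimes_r^\ell}$ attains rank $m\kappa$ — then by Proposition \ref{prop:genericProp} (taking $\Phi$ to be the polynomial parameterization, $V = V(f)$ for the vanishing minor, and observing that the rank is monotone so it suffices to handle one value of $\ell$, say $\ell = 3$, since $M^{\otimes_r^{\ell+1}}$ has the columns of $M^{\otimes_r^\ell}$ among its column-tensor factors in a way that does not decrease row rank) the rank is full for generic parameter choices.

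The reduction to $\ell = 3$ deserves a word: the row space of $M^{\otimes_r^\ell}$, viewed as the span of the rows $r \mapsto (r^{\otimes \ell})$, is nondecreasing in $\ell$, because $r^{\otimes(\ell+1)}$ determines $r^{\otimes \ell}$ by a coordinate projection (fix the last index), so any linear dependence among the rows of $M^{\otimes_r^{\ell+1}}$ forces the same dependence among the rows of $M^{\otimes_r^\ell}$; hence full row rank at $\ell=3$ gives full row rank for all $\ell\ge 3$. So I only need to produce a single numerical choice of the $m\kappa=20m$ parameters, for each $m$ up to $77$, making the $20m\times 20^3$ matrix $M^{\otimes_r^3}$ have rank $20m$. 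In fact it is enough to do this for $m=77$: restricting to the first $20m$ rows of a full-row-rank witness for $m=77$ gives a full-row-rank witness for smaller $m$, since a submatrix consisting of a subset of the rows of a full-row-rank matrix still has full row rank. So the whole proposition reduces to one explicit rank computation.

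This single computation is where \pari enters and is the main obstacle — not conceptually, but in that one must actually select $77$ concrete vectors $(a_1^{(k)},\dots,a_{20}^{(k)})$, form the $1540\times 8000$ integer (or rational) matrix $M^{\otimes_r^3}$, and verify its rank is $1540$ by exact arithmetic. The delicate point is choosing the $a_i^{(k)}$ so the computation succeeds: a naive random integer choice should work with probability $1$ in exact arithmetic, but one wants reproducibility, so I would fix a deterministic choice (e.g.\ small distinct integers, or values of a simple generating formula in $k$ and $i$) and record it, then report that \pari confirms $\rank M^{\otimes_r^3}=1540$. The bound $m\le 77$ in the statement presumably reflects exactly the largest $m$ for which such a witness was found (and is consistent with $20\cdot 77 = 1540 \le 8000$); I would simply cite the \pari verification for $m=77$ and deduce all smaller $m$ by the row-subset argument, then all $\ell\ge 3$ by the monotonicity argument, then genericity by Proposition \ref{prop:genericProp}.
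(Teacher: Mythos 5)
Your proposal is correct and follows essentially the same route as the paper: a single exact \pari rank computation for the witness $m=77$, $\ell=3$, dropping blocks (rows) for smaller $m$, Proposition \ref{prop:genericProp} for genericity, and a monotonicity argument to pass to $\ell>3$. The only imprecision is in that last step: fixing the last index of $r^{\otimes(\ell+1)}$ at $c$ yields $r(c)\,r^{\otimes \ell}$ rather than $r^{\otimes \ell}$, so a dependence transfers with coefficients scaled by the entries $r_i(c)$; this is harmless because no row of $M$ can be zero when $M^{\otimes_r^{3}}$ has independent rows (the paper's version of the same point is to pick a column of $M^{\otimes_r^{\ell-3}}$ with all entries nonzero, which holds generically).
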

\begin{proof}
We begin with the special case of $\ell=3$. An exact \pari calculation 
shows that for $m=77$ by picking distinct random integers for $a_1,\dots, a_\kappa$ 
for each of the $m$ blocks in $M$, we may find a point $p_1=M$ for which $M^{\otimes_r^3}$ has full row rank.  
By removing some of the blocks from this example if $m < 77$ we obtain a point $p_1$ for which 
$M^{\otimes_r^{3}}$ has full row rank for smaller $m$ as well.   
	
To show that full row rank is a generic condition when $\ell = 3$, fix $m\leq 77$, and  
observe that the map from  the space $\mathbb C^{m\kappa}$ of the $a_i$ to $M$ is analytic.  Since $p_1=M$ gives $M^{\otimes_r^{3}}$ full row rank, there is some
$m\kappa\times m\kappa$ minor $f$ of $M^{\otimes^3_r}$ which when viewed as a polynomial in the 
entries of $M$ has $f(p_1)\neq 0$. Taking $V=V(f)$, then Proposition \ref{prop:genericProp}
shows that generic choices of the $a_i$ give $f(M)\ne 0$ so $M^{\otimes^3_r}$ has rank $m\kappa$.	 
	
Now consider $\ell> 3$. Then 
$M^{\otimes_r^{\ell}}=M^{\otimes_r^{3}}\otimes_r M^{\otimes_r^{\ell-3}}$,
%where $M^{\otimes_r^{3}} = \left ( \mu_{ij} \right)$ is a $m\kappa\times q$ matrix and 
%$M^{\otimes_r^{\ell-3}} = \left ( \alpha_{kl} \right)$ is a $m\kappa\times u$ matrix, 
%with $q=\kappa^3$ and $u=\kappa^{\ell-3}$. 
%Since $M^{\otimes_r^{3}}$ has full row rank $m\kappa$ for 
%generic $M$, its rows are independent. But, with $v=m\kappa$,
%$$
%	M^{\otimes_r^{3}}\otimes_r M^{\otimes_r^{\ell-3}}
%	=\begin{pmatrix}
%	\mu_{11}\alpha_{11}&\mu_{12}\alpha_{11}&\cdots&\mu_{1q}\alpha_{11}&\cdots\cdots\\
%	\mu_{21}\alpha_{21}&\mu_{22}\alpha_{21}&\cdots&\mu_{2q}\alpha_{21}&\cdots\cdots\\
%	\vdots& \ddots& &\vdots \\
%	\mu_{v1}\alpha_{v1}&\mu_{v2}\alpha_{v1}&\cdots&\mu_{vq}\alpha_{v1}&\cdots\cdots
%	\end{pmatrix},
%$$
where $M^{\otimes_r^{3}} = \left ( \mu_{ij} \right)$ is a $m\kappa\times \kappa^3$ matrix and 
$M^{\otimes_r^{\ell-3}} = \left ( \alpha_{kl} \right)$ is a $m\kappa\times \kappa^{\ell-3}$ matrix.
%, with $q=\kappa^3$ and $u=\kappa^{\ell-3}$. 
Since $M^{\otimes_r^{3}}$ has full row rank $m\kappa$ for 
generic $M$, its rows are independent. But, with $v=m\kappa$,
$$
M^{\otimes_r^{3}}\otimes_r M^{\otimes_r^{\ell-3}}
=\begin{pmatrix}
\mu_{11}\alpha_{11}&\mu_{12}\alpha_{11}&\cdots&\mu_{1\kappa^3}\alpha_{11}&\cdots\cdots\\
\mu_{21}\alpha_{21}&\mu_{22}\alpha_{21}&\cdots&\mu_{2\kappa^3}\alpha_{21}&\cdots\cdots\\
\vdots& \ddots& &\vdots \\
\mu_{v1}\alpha_{v1}&\mu_{v2}\alpha_{v1}&\cdots&\mu_{v\kappa^3}\alpha_{v1}&\cdots\cdots
\end{pmatrix},
$$
so it is enough to know that the entries of  some single column of $M^{\otimes_r^{\ell-3}}$ 
are nonzero  and that $M^{\otimes_r^3}$ has independent rows to ensure $M^{\otimes_r^{\ell}}$ 
has independent rows.  But this is true for generic choices of parameters for $M$.
\end{proof}

The next proposition gives a lower bound on Kruskal row rank, valid for all $M^{\otimes_r^{\ell}}$.

\begin{prop} \label{prop:LowerBoundRankTerminal}\label{Prop3.19}
	For $\kappa\geq2$, let $M$ be a $m\kappa\times\kappa$ matrix formed by stacking $m \ge 1$ choices of 
	matrices of the form  $M(a_1,\dots,a_\kappa)$. For $\ell\geq 1$, $M^{\otimes_r^{\ell}}$ has Kruskal 
	row rank greater than or equal to $2$ for generic choices of the $a_i$.
\end{prop}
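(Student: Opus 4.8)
The goal is to show that $M^{\otimes_r^{\ell}}$ has Kruskal row rank at least $2$ generically, which amounts to showing that generically no two rows of $M^{\otimes_r^{\ell}}$ are scalar multiples of each other. The plan is to reduce to the case $\ell = 1$ and then exhibit a single explicit parameter choice, invoking Proposition \ref{prop:genericProp} to upgrade this to a generic statement. First I would observe that the rows of $M^{\otimes_r^{\ell}}$ are the $\ell$-fold row-tensor powers of the rows of $M$; that is, if $r_i$ and $r_j$ are the $i$th and $j$th rows of $M$, then the corresponding rows of $M^{\otimes_r^{\ell}}$ are $r_i^{\otimes \ell}$ and $r_j^{\otimes \ell}$ (entrywise $\ell$-fold products). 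Two such rows are proportional only if $r_i$ and $r_j$ are themselves proportional: if $r_i^{\otimes\ell} = c\, r_j^{\otimes\ell}$ with some entry $r_j(k)\ne 0$, comparing the $(k,k,\dots,k)$ coordinate and a coordinate $(k,\dots,k,t)$ forces $r_i(t)/r_j(t)$ to be constant across $t$ wherever $r_j(t)\ne 0$; the positions where $r_j(t)=0$ are handled by noting the corresponding coordinate of $r_i^{\otimes\ell}$ must vanish too. So it suffices to prove that generically the matrix $M$ itself has Kruskal row rank at least $2$, i.e.\ no two of its $m\kappa$ rows are proportional.

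Next I would set this up for Proposition \ref{prop:genericProp}: the map from $\mathbb C^{m\kappa}$ (the parameters $a_i$ for each of the $m$ blocks) to the matrix $M$ is polynomial, hence analytic, on all of $\mathbb C^{m\kappa}$. The condition ``rows $p$ and $q$ of $M$ are proportional'' is expressed by the vanishing of all $2\times 2$ minors of the $2\times\kappa$ submatrix formed by those two rows; ``Kruskal row rank $\le 1$'' is the union over all pairs $(p,q)$ of these conditions, hence an algebraic variety $V$ cut out by a finite set of polynomials $S$ in the entries of $M$. To apply Proposition \ref{prop:genericProp} I need a single parameter point $u_1$ at which some $f \in I(V)$ is nonzero — equivalently, a choice of the $a_i$ for each block making all $m\kappa$ rows of $M$ pairwise non-proportional. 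For a single block $M(a_1,\dots,a_\kappa)$, rows $p$ and $q$ (with $p\ne q$) differ: row $p$ has $1+a_p - s$ in position $p$ and $a_p$ elsewhere, while row $q$ has $a_p$ in position $p$; these agree in column $p$ only if $1+a_p-s = a_p$, i.e.\ $s=1$, and in general for $s\ne 1$ and the $a_i$ distinct one checks easily that no two rows within a block are proportional. Taking distinct generic (e.g.\ random integer) values of the $a_i$ in each of the $m$ blocks, chosen so that the blocks are also ``generic relative to each other,'' makes rows from different blocks non-proportional as well; this single point $u_1$ witnesses $f(M(u_1)) \ne 0$ for some $f$ in the ideal of the exceptional variety.

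The main obstacle, such as it is, is the bookkeeping needed to verify that the explicit point $u_1$ really does give Kruskal row rank $\ge 2$ — in particular ruling out proportionality between rows drawn from two \emph{different} blocks, since two distinct blocks $M(a_1,\dots,a_\kappa)$ and $M(a_1',\dots,a_\kappa')$ have rows of similar shape and one must check that generic distinct choices of the parameters avoid accidental proportionality. This is routine: a row from block one proportional to a row from block two would impose codimension-$(\kappa-1)$ polynomial constraints on the $2m\kappa$ parameters (or fewer), and one can either verify non-vanishing of the relevant minors at a concrete integer choice by hand (or by a short \pari check consistent with the paper's methodology) or argue directly that within each block the ``off-support'' entries are all equal to the common value $a_j$ while the diagonal is perturbed, so matching two such rows forces equality of the two parameter vectors up to the perturbation, which fails generically. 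Once the witness point is in hand, Proposition \ref{prop:genericProp} immediately yields that $M$, and hence $M^{\otimes_r^{\ell}}$ for every $\ell\ge 1$, has Kruskal row rank $\ge 2$ for generic choices of the $a_i$.
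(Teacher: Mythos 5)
Your proposal is correct and follows essentially the same route as the paper: reduce to the case $\ell=1$, exhibit a single explicit choice of the $a_i$ with no two of the $m\kappa$ rows proportional, and invoke Proposition \ref{prop:genericProp} (the paper's witness is $m\kappa$ distinct small positive values, and since every row of $M(a_1,\dots,a_\kappa)$ sums to $1$, any proportionality constant is forced to be $1$, which makes the within- and cross-block checks you call ``routine'' immediate). The only deviation is the $\ell>1$ step, where you argue deterministically that non-proportional rows have non-proportional row-tensor powers, whereas the paper reuses the column-scaling argument from the end of Proposition \ref{prop:rankTerminal}; both work, and yours avoids the extra generic nonvanishing condition.
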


\begin{proof}
	Consider first the case that $\ell=1$. The matrices of Kruskal rank at most 1 form an 
	algebraic variety $V$. By Proposition \ref{prop:genericProp}, it is enough to find a single
	matrix $M$ not in $V$ to see that generically such matrices have Kruskal rank at least two.
	Choose $m\kappa$ distinct positive small numbers as the free entries $a_1,\dots, a_{\kappa}$ 
	in each block of $M$, so that the diagonal entries are the largest in the block. Then no two rows 
	within any block $M(a_1,\dots,a_\kappa)$ are multiples of each other, and no two rows of different 
	blocks are multiples either, since the $a_i$'s are distinct. Thus  $M$ has Kruskal rank greater than or equal to two.
	
	The case when $\ell>1$ follows by an argument similar to that at the end of the proof 
	of Proposition \ref{prop:rankTerminal}.
\end{proof}

The final propositions in this section involve generic ranks of stacked matrices formed 
by taking certain tensor products of matrices of the form above.

\begin{prop} \label{prop:rankWrongSplit}   
	Let $M$ be a $m\kappa^2\times \kappa^3$ matrix formed by stacking $m$ choices of matrices 
	of the form $M(a_1,\cdots, a_\kappa)^{\otimes^2_r}\otimes M(a_1,\cdots, a_\kappa)$. Then 
	for $\kappa=20$ and $m < 77$,  the matrix $M$ has rank greater than $m\kappa$ for generic choices of the $a_i$.
\end{prop}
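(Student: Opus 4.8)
The plan is to follow the pattern of the proof of Proposition~\ref{prop:rankTerminal}: reduce the generic rank statement to a single parameter choice via Proposition~\ref{prop:genericProp}, then verify that choice with an exact \pari computation. Let $\Phi$ be the map sending the $m\kappa$ free scalars $a_i$ (a block's worth of $\kappa$ of them for each of the $m$ stacked matrices) to $M\in\CC^{m\kappa^2\times\kappa^3}$; this map is polynomial, hence complex analytic. The set of matrices of rank at most $m\kappa$ is the variety $V$ cut out by all $(m\kappa+1)\times(m\kappa+1)$ minors, and every such minor lies in $I(V)$. So, by Proposition~\ref{prop:genericProp}, it is enough to produce, for each fixed $m<77$, one parameter point $u_1$ for which $\rank\Phi(u_1)>m\kappa$ (i.e.\ for which some $(m\kappa+1)$-minor $f\in I(V)$ has $f(\Phi(u_1))\ne 0$); the bound then holds for generic complex, and for generic real, choices of the $a_i$.

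The key structural observation is that each row of a block $M(a_1,\dots,a_\kappa)^{\otimes^2_r}\otimes M(a_1,\dots,a_\kappa)$ is symmetric in its first two $\kappa$-index coordinates, since the entries of $r\otimes_r r$ are $r_kr_l$. Hence the row space of the whole stacked matrix $M$ lies in $\operatorname{Sym}^2(\CC^\kappa)\otimes\CC^\kappa$, of dimension $\binom{\kappa+1}{2}\kappa$, which for $\kappa=20$ is $210\cdot 20=4200$; so $\rank M\le 4200$ for every $m$. A generic single block has its maximal possible rank $\kappa^2$ (a short computation shows the $\kappa$ rows of $M(a_1,\dots,a_\kappa)^{\otimes^2_r}$ are generically independent, while $M(a_1,\dots,a_\kappa)$ has rank $\kappa$), so by dimension counting the ceiling $4200$ should first be reached at the pivot value $m=\lceil\binom{\kappa+1}{2}/\kappa\rceil=11$. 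The central step is therefore an exact \pari computation at $m=11$: choose $11\kappa$ distinct random integers, form the $4400\times 8000$ matrix $M_{11}$, and check $\rank M_{11}=4200$. In practice one would reduce modulo a large prime to avoid intermediate coefficient growth; since the rank of an integer matrix over $\mathbb F_p$ is a lower bound for its rank over $\QQ$, obtaining $4200$ modulo $p$ certifies $\rank M_{11}=4200$ over $\CC$.

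Every case $m<77$ then follows from this one computation. If $m\le 11$, take $\Phi(u_1)$ to be the first $m$ blocks of $M_{11}$; deleting the remaining $(11-m)\kappa^2$ rows lowers the rank by at most that amount, so $\rank\Phi(u_1)\ge 4200-(11-m)\cdot 400=400m-200$, which exceeds $m\kappa=20m$ for every $m\ge 1$. If $m\ge 11$, take $\Phi(u_1)$ to be $M_{11}$ with $m-11$ arbitrary further blocks appended; this contains $M_{11}$ as a row submatrix, so its rank is at least $4200$, while the symmetry constraint forces it to be at most $4200$, hence equal to $4200>20m$ because $m<77<210$. In either case $\rank\Phi(u_1)>m\kappa$, so Proposition~\ref{prop:genericProp} applies and the proof is complete. (The same argument in fact gives the conclusion for all $m<\binom{\kappa+1}{2}=210$.)

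I expect the main obstacle to be the rank computation itself: a dense $4400\times 8000$ matrix with sizeable integer entries must be processed, which realistically forces the modular route together with some care in the random choices so that no accidental rank drop is introduced. The conceptual points worth double-checking are that symmetry in the first two tensor factors is the \emph{only} structural constraint on the row space --- so that the ceiling is exactly $\binom{\kappa+1}{2}\kappa$ and not something smaller --- and that the coordinate-parametrized family of $\kappa$-planes $\operatorname{rowspace}(M(a_1,\dots,a_\kappa)^{\otimes^2_r})\subseteq\operatorname{Sym}^2(\CC^\kappa)$ is nondegenerate enough that $11$ generic members already span $\operatorname{Sym}^2(\CC^\kappa)$; it is this last assertion that the \pari check pins down.
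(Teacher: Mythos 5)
Your overall strategy --- reduce to exhibiting one good parameter point via Proposition \ref{prop:genericProp}, certify it by an exact \pari rank computation, and handle larger $m$ by appending or repeating blocks --- is exactly the paper's. The gap is in the central numerical claim on which your whole case analysis rests: that the stacked matrix attains the ``symmetric'' ceiling $\binom{\kappa+1}{2}\kappa=4200$ already at $m=11$, equivalently that the ranks of distinct blocks add at $400$ per block until that ceiling is reached. Symmetry of each row in its first two factors is \emph{not} the only structural constraint. Writing $M(a)=(1-s)I+\mathbf 1 a^T$, the row space of a single block is $W_a\otimes\CC^\kappa$ with $W_a=\operatorname{span}\{((1-s)e_i+a)^{\otimes 2}\}\subset\operatorname{Sym}^2(\CC^\kappa)$, and for three or more blocks the subspaces $W_a$ acquire nontrivial linear relations: the paper's exact computations give rank $1180$ (not $1200$) at $m=3$ and $1540$ (not $1600$) at $m=4$. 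So the additivity underlying your dimension count is already false at $m=3$, the extrapolated value of $\rank M_{11}$ falls well short of $4200$ (roughly $3500$ if the observed trend continues), and the \pari check you propose would almost certainly return a smaller number. With it collapse the bound $\rank\Phi(u_1)\ge 4200-(11-m)\cdot 400$ for $m\le 11$, the claimed equality $\rank=4200$ for $11\le m<77$, and the parenthetical extension to all $m<210$.

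The repair is essentially what the paper does: you do not need the ceiling to be attained, only a computed rank exceeding $20\cdot 76=1520$ at some pivot $m_0$, together with direct verification for $m<m_0$. The paper computes $m=1,2,3,4$ (ranks $400$, $800$, $1180$, $1540$), notes $1540=20\times 77$, and covers every $5\le m<77$ by repeating blocks, handling the small $m$ by their own computations rather than by a row-deletion bound whose premise is an unproven value of $\rank M_{11}$; this also keeps the certified matrix at size $1600\times 8000$ instead of $4400\times 8000$. If you want your stronger range $m<210$, you would first have to determine computationally where $\dim\bigl(\sum_t W_{a^{(t)}}\bigr)$ actually saturates at $210$, which the paper's data suggests happens well beyond $m=11$.
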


\begin{proof}
	A \pari calculation shows that for some choice of random integers $a_i$, $M$ has 
	\begin{enumerate}
		\item full row rank $400 > m\kappa=20$, when $m=1$;
		\item full row rank $800 > m\kappa=40$, when $m=2$;
		\item rank $1180 > m\kappa=60$, when $m=3$; and 
		\item rank $1540 > m\kappa=80$, when $m=4$. \label{test}
	\end{enumerate}
	Furthermore, by \eqref{test}, for $m\ge 5$, there exists a matrix $M$ with 
	rank at least $1540 = 20 \times 77$ for some choice of $a_i$'s, since we may repeat some blocks.  Using Proposition \ref{prop:genericProp},
	the stated rank condition on $M$ is thus generic for all $m < 77$.
\end{proof}

\begin{prop} \label{prop:rankJD}   
	Let $M_1$ be of the form of $M$ in Proposition \ref{prop:rankWrongSplit}, and $M_2$ be formed by 
	stacking $m$ matrices of the form $M(a_1,\cdots, a_\kappa) \otimes M(a_1,\cdots, a_\kappa)^{\otimes^2_r}$. 
	Let $L$ be a $m\kappa^2\times m\kappa^2$ diagonal matrix with positive entries. Then for 
	$\kappa=20$ and $m<74$, $M_2^T L M_1$ has rank greater than $m\kappa$ for generic choices of the $a_i$.
\end{prop}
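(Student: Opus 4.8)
The plan is to follow the pattern of Proposition~\ref{prop:rankWrongSplit}: exhibit a single parameter point at which $M_2^TLM_1$ has rank exceeding $m\kappa$, and then promote this to a generic statement via Proposition~\ref{prop:genericProp}. Since $M(a_1,\dots,a_\kappa)$ depends polynomially on the $a_i$, the entries of the $\kappa^3\times\kappa^3$ matrix $M_2^TLM_1$ are polynomials in the $a_i$ and in the diagonal entries of $L$, so the map producing it is complex analytic; the condition $\operatorname{rank}(M_2^TLM_1)>m\kappa$ is the nonvanishing of some $(m\kappa+1)\times(m\kappa+1)$ minor $f$, and Proposition~\ref{prop:genericProp} applies with $V=V(f)$ as soon as $f$ is nonzero at one point. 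I would treat the positive entries of $L$ as additional free parameters alongside the $a_i$; this is the form in which the result is used in the proof of Theorem~\ref{thm:mainThm}, where $L$ is itself assembled from varying model parameters, so that genericity in the pair $(a_i,L)$ is exactly what is needed.

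A useful preliminary is the block decomposition: writing $B_{1,j}$ and $B_{2,j}$ for the $j$-th $\kappa^2\times\kappa^3$ blocks of $M_1$ and $M_2$, and $L_j$ for the corresponding positive diagonal $\kappa^2\times\kappa^2$ block of $L$,
$$M_2^T L M_1=\sum_{j=1}^m B_{2,j}^T L_j B_{1,j},$$
a sum of $m$ matrices each of rank at most $\kappa^2=400$. For $\kappa=20$ I would run an exact \pari\ computation of this rank for $m=1,2,3,4$, using distinct random integers for the $a_i$ in each block and random positive integers on the diagonal of $L$, expecting to obtain ranks $400$, $800$, $1200$ for $m=1,2,3$ and, crucially, rank $1480=20\cdot 74$ for $m=4$.

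The essential new ingredient --- and the main obstacle --- is that the ``repeat a block'' shortcut used in Proposition~\ref{prop:rankWrongSplit} is unavailable here, since two blocks carrying identical parameters contribute $B_{2,j}^T(L_j+L_{j'})B_{1,j}$, i.e.\ only one block's worth of rank, so one cannot deduce the bound for large $m$ from small $m$ that way. Instead I would establish a monotonicity statement: the generic (that is, maximal) rank of $M_2^TLM_1$ over choices of the $a_i$ and $L$ is non-decreasing in $m$. The argument: take a configuration of the $m$ blocks and of $L$ realizing the generic rank for $m$ blocks, adjoin an $(m+1)$-st block with arbitrary $a$-parameters and with its diagonal block of $L$ equal to $\epsilon I$ with $\epsilon>0$; the resulting $(m+1)$-block matrix is the $m$-block matrix plus $\epsilon\,B_{2,m+1}^TB_{1,m+1}$, which converges to the $m$-block matrix as $\epsilon\to0^+$, so by lower semicontinuity of rank its rank is at least the $m$-block generic rank for all small positive $\epsilon$. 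Combined with the $m=4$ computation, this gives generic rank $\ge 1480$ for every $m\ge 4$, hence $>20m=m\kappa$ exactly when $m\le 73$; together with the direct checks at $m=1,2,3$ this yields the bound for all $m<74$. This is the proposition that pins down the ``at most $73$ classes'' hypothesis of Theorem~\ref{thm:mainThm}, the equality $1480=20\cdot 74$ being precisely where the argument becomes tight.
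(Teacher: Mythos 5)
Your overall strategy (exhibit one witness point, then invoke Proposition \ref{prop:genericProp}) is the right one, but there is a genuine mismatch between what you prove and what the proposition asserts. The statement fixes an \emph{arbitrary} positive diagonal $L$ and claims the rank bound for generic choices of the $a_i$ alone; you instead treat the entries of $L$ as additional free parameters and prove genericity in the pair $(a_i,L)$. That is weaker: your witness points (a random positive $L$ for $m\le 4$, and $\epsilon I$ blocks in the monotonicity step) show that some $(m\kappa+1)\times(m\kappa+1)$ minor is not identically zero as a function of $(a_i,L)$, but for a fixed, possibly adversarial $L$ that same minor, viewed as a polynomial in the $a_i$ only, could a priori vanish identically, and nothing in your argument rules this out; at best you get the statement for almost every $L$. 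Your justification that the joint version is ``exactly what is needed'' downstream also does not hold up: in the proof of Proposition \ref{prop:splitRank} the diagonal middle factor is not free --- it is the (trimmed) joint class--state distribution at the two ends of the central edge, hence a function of the very same profiles, rates and weights that produce the $a_i$ --- so the relevant points lie on a measure-zero subset of your enlarged $(a_i,L)$-space, about which joint genericity says nothing.

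The missing ingredient, and the paper's actual route, is a bound \emph{uniform} in $L$: Sylvester's rank inequality gives $\rank(M_2^T L M_1)\ge \rank(M_2)+\rank(LM_1)-m\kappa^2 = 2\,\rank(M_1)-m\kappa^2$ for every positive diagonal $L$, since $M_1$ and $M_2$ agree up to row and column permutations and $L$ is invertible. Thus the exceptional set of $a_i$ is controlled by $\rank(M_1)$ alone, i.e.\ by the computation already done in Proposition \ref{prop:rankWrongSplit} (rank $1540$ at $m=4$, giving $2\cdot 1540-1600=1480=20\cdot 74$), with no new \pari run and no dependence on $L$. This uniformity also shows that your objection to the ``repeat a block'' shortcut is misplaced: with duplicated parameter blocks the product $\sum_j B_{2,j}^T L_j B_{1,j}$ collapses to the $4$-block product with the corresponding diagonal blocks of $L$ summed, which is again a product of the allowed form with a positive diagonal middle factor, so the same bound $1480$ applies for every $m\ge 4$ and every positive $L$. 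Your $\epsilon$-perturbation/lower-semicontinuity argument is correct as far as it goes, and your proposed direct rank computations for $m\le 4$ would certainly return at least the Sylvester values, but without an $L$-uniform step the proof as written does not establish the proposition as stated, nor the form of it that the application in Proposition \ref{prop:splitRank} actually requires.
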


\begin{proof}
	
	Sylvester's rank inequality gives
	$$
	\mathrm{rank}(M_2^T L M_1)\geq \mathrm{rank}(M_2^T)+\mathrm{rank}(LM_1)-m\kappa^2.
	$$
	Since $M_1$ and $M_2$ differ only by row and column permutations, they have the same rank.
	Moreover, $\rank(L) = \rank(L M_1)$ since $L$ is a diagonal matrix with positive entries.
	Then, by Proposition \ref{prop:rankWrongSplit}, there is a choice of $a_i$'s so that 
	$M_2^T L M_1$ has rank at least
	\begin{enumerate}
		\item $400+400-400=400 > m \kappa = 20$, when $m=1$;
		\item $800+800-800=800 > m \kappa = 40$, when $m=2$;
		\item $1180+1180-1200=1160 > m \kappa = 60$, when $m=3$; and
		\item $1540+1540-1600=1480 > m \kappa = 80$, when $m=4$.
	\end{enumerate}
	The rank computation for $m = 4$ shows additionally that there exist choices of $a_i$ giving  
	$\rank ( M_2^T L M_1 ) = 1480$ for larger $m$ , since blocks can be repeated. But $1480 = 20 \times 74$ so by
	 Proposition  \ref{prop:genericProp}, generically then the rank
	must be greater than $ m\kappa$ for all $m< 74$.
\end{proof}

\section{Algebraic Aspects of the Profile Mixture Model} \label{sec:AlgPM}

Next we relate the algebraic definitions made in the previous section to phylogenetic models and the 
PM model in particular. We begin
by describing how a row tensor product of Markov matrices relates to parameters on a star tree.

\begin{definition} \label{def:starMatrixParam}    
	Let $A$ be a set of taxa on a star tree rooted at its internal node, with pendant edges 
	$e_1,\dots,e_{|A|}$ and associated Markov matrices $M^{e_i}$. Then 
	\begin{equation}\label{eq:matrixStarTree}
	M_{A}= M^{e_1}\otimes_r \dots\otimes_r  M^{e_{|A|}}.
	\end{equation}       
\end{definition}

For an $m$-class PM model on a star tree, the matrix $M_{A}$ is of size $m\kappa\times\kappa^{|A|}$.
Its entries are conditional probabilities of observing different $|A|$-tuples of states at the taxa in 
set $A$, given the state at the root. 

\medskip

Given a tree $T$ on taxa $X$, tripartitions and splits of $X$
can be associated to the topological structure of $T$. 
For instance, the tree of Figure \ref{fig:tree} displays a tripartition $A|B|C$ 
with $A=\{a,b,c\}, B=\{d,f\}, C=\{g,h\}$.  Formally, a tripartition 
$A|B|C$ is \emph{displayed} on a tree if there is some vertex $v$ of $T$ whose deletion results 
in three subtrees with  $A, B, C$ labeling their leaves.  Similarly,
if $A'=\{a,b,c\}$ and $B'=\{d,f,g,h\}$, then $X = A' \sqcup B'$, and $T$ \emph{displays} the split $A'|B'$
of $X$, since there is an edge $e$ whose deletion results in two subtrees 
with leaves labeled by $A'$ and $B'$.	

\begin{figure}[h]
	
	\begin{center}
		\begin{tikzpicture}[node distance=2cm]
		\coordinate[] (cint) {};
		\coordinate[left=0.5cm and 1cm of cint] (lint) {};
		\coordinate[above right=0.5cm and 1cm of cint] (rint) {};
		\coordinate[left=0.5cm and 0.8cm of lint] (c) {};
		\coordinate[below =0.9cm and 0.4cm of lint] (d) {};
		\coordinate[below left=0.5cm and 0.5cm of d] (j) {};
		\coordinate[below right=0.5cm and 0.5cm of d] (k) {};
		\coordinate[below right=0.9cm of cint] (a) {};
		\coordinate[above right=0.5cm and 0.5cm of rint] (b) {};
		\coordinate[below right=0.5cm and 0.5cm of rint] (e) {};
		\coordinate[below left=0.5cm and 0.5cm of c] (g) {};
		\coordinate[above left=0.5cm and 0.5cm of c] (i) {};

		\draw (d) -- (lint)  (lint) -- (cint)  (rint) -- (e)  (cint) -- (a) (rint) -- (b)  (c)-- (g)  (c)-- (i)   (cint)-- (rint) (lint) --(c)  (d) --(j)   (d) --(k);
			
		\begin{scope}[node distance=0.3cm]
		\node[left of=g] {$h$};
		\node[left of=i] {$g$};
		\node[right of=b] {$a$};
		\node[right of=e] {$b$};
		\node[below of=j] {$d$};
		\node[below of=k] {$f$};
		\node[right of=a] {$c$};
		\node[above right=-0.1 cm of lint, xshift=-0.15cm] {$v$};
		\node[above right=-0.1 cm of lint, xshift=0.3cm] {$e$};
		\end{scope} 			
		\end{tikzpicture}
	\end{center}
	\caption{A tree displaying the tripartition $A|B|C$ and the split $A|B\cup C$,
		where $A=\{a,b,c\}, B=\{d,f\}, C=\{g,h\}$.} \label{fig:tree}   
\end{figure}
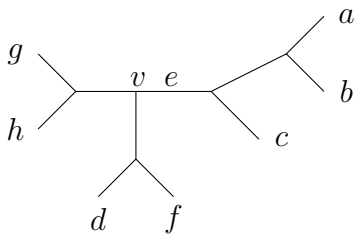	

When a tree $T$ displays a tripartition of a set of taxa, then the flattening of a joint 
distribution corresponding to that tripartition can be expressed using the $3$-way matrix 
product of certain matrices built from model parameters.

\begin{lemma} \label{lem:vertexParams}   
	Suppose $T$ is a tree on a set of taxa $X$ rooted at an internal vertex $v$ and that $T$ displays
	the tripartition $A|B|C$ associated to $v$. 
	Let $P$ be a probability distribution for a 
	Markov model $\mathcal M$ 
	on $T$ with 
	$\ell$ states at the internal nodes. 
	Then there exist matrices $\overline{M}_A$, $\overline{M}_B$, $\overline{M}_C$ constructed
	from model parameters for $\mathcal M$, each with $\ell$ rows, such that
	$$
	\operatorname{Flat}_{A|B|C}(P)=[\overline{M}_A,\overline{M}_B,\overline{M}_C].
	$$
\end{lemma}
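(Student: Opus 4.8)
The plan is to collapse $T$ at the vertex $v$ and invoke the conditional-independence (Markov) property there. Since $T$ displays the tripartition $A|B|C$ associated to $v$, deleting $v$ yields three subtrees which, with $v$ re-attached as their common root, are rooted trees $T_A$, $T_B$, $T_C$ with leaf sets $A$, $B$, $C$. Taking $v$ as the root of $T$, the Markov property gives that, conditioned on the state $s\in[\ell]$ at $v$, the tuples of leaf states in $T_A$, $T_B$, $T_C$ are mutually independent. Writing $\boldsymbol\pi=(\pi_1,\dots,\pi_\ell)$ for the distribution at $v$ and $P_A(\mathbf i\mid s)$, $P_B(\mathbf j\mid s)$, $P_C(\mathbf k\mid s)$ for the conditional joint distributions of leaf states in the three subtrees given state $s$ at $v$, one obtains
$$P(\mathbf i,\mathbf j,\mathbf k)=\sum_{s=1}^{\ell}\pi_s\,P_A(\mathbf i\mid s)\,P_B(\mathbf j\mid s)\,P_C(\mathbf k\mid s),$$
for $\mathbf i\in[\kappa]^{|A|}$, $\mathbf j\in[\kappa]^{|B|}$, $\mathbf k\in[\kappa]^{|C|}$.

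Next I would encode these conditionals as matrices with $\ell$ rows. Let $N_A$ be the $\ell\times\kappa^{|A|}$ matrix with $(s,\mathbf i)$ entry $P_A(\mathbf i\mid s)$, and define $N_B$, $N_C$ analogously. Putting $\overline M_A=\operatorname{diag}(\boldsymbol\pi)\,N_A$, $\overline M_B=N_B$, and $\overline M_C=N_C$ (absorbing the root distribution into a single factor), the displayed formula says precisely that $P(\mathbf i,\mathbf j,\mathbf k)$ is the $(\mathbf i,\mathbf j,\mathbf k)$ entry of $[\overline M_A,\overline M_B,\overline M_C]$ in the sense of Definition \ref{def:3wayProd}; comparing with Definition \ref{def:flattening}, this is the $(\mathbf i,\mathbf j,\mathbf k)$ entry of $\operatorname{Flat}_{A|B|C}(P)$, so the two tensors coincide, and each factor has $\ell$ rows as required.

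It remains to justify that $\overline M_A$, $\overline M_B$, $\overline M_C$ are genuinely \emph{constructed from model parameters}, i.e.\ that each $N_\bullet$ is an explicit function of the edge Markov matrices of the corresponding subtree. I would verify this by a short induction on the rooted subtree $T_A$ (and symmetrically for $T_B$, $T_C$): if $v$ has neighbors $u_1,\dots,u_d$ in $T_A$ with Markov matrices $M^{e_1},\dots,M^{e_d}$ on the edges $v\to u_j$, then conditioning additionally on the states at the $u_j$ and using independence of the $d$ subtrees below $v$ expresses $N_A$ as a row-tensor product of the products $M^{e_j}N_{u_j}$, where $N_{u_j}$ is the analogous conditional matrix for the subtree rooted at $u_j$; the base case, a single pendant edge $e$, gives $N_A=M^e$. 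This is exactly the pattern of Definition \ref{def:starMatrixParam} applied recursively, and shows $N_A$ is built by iterated ordinary matrix products and row-tensor products of the edge Markov matrices, which for the PM model are the block-diagonal and stacked matrices of Definition \ref{def:PM_lambda_kappa}.

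I do not expect a real obstacle here: the statement is the standard ``marginalize the internal states, then flatten at $v$'' computation, and the only step requiring any writing is the recursion describing $N_A$, $N_B$, $N_C$ in terms of edge parameters. The sole point to watch is purely notational bookkeeping, namely keeping track of how index tuples for the subtrees are ordered inside the flattening and inside the row-tensor products, and the excerpt has already declared such orderings inconsequential for our purposes.
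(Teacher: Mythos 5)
Your proposal is correct and follows essentially the same route as the paper: condition on the state at $v$, use the Markov (conditional independence) property to factor $P$ as $[\boldsymbol\pi;\,M_A,M_B,M_C]$, and absorb $\diag(\boldsymbol\pi)$ into one factor to get $[\overline M_A,\overline M_B,\overline M_C]$. The paper's proof is just a terser version of this, taking for granted the recursive construction of the conditional matrices from edge parameters that you spell out.
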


\begin{proof}
	From the parameters on $T$ we may define Markov matrices $M_{A}, M_{B}, M_{C}$ 
	whose entries are conditional probabilities of states at 
	the leaves in each set $A,B,C$, given the state at $v$. Let $\boldsymbol \pi$ 
	be the state distribution at $v$. Then 
	$$
	\operatorname{Flat}_{A|B|C}(P)=[\boldsymbol \pi; M_{A}, M_{B}, M_{C}]=[\overline{M}_{A}, \overline{M}_{B}, \overline{M}_{C}],
	$$   
where $\overline{M}_{A}=\diag(\boldsymbol \pi)M_A$, $\overline{M}_B = M_B$, and $\overline{M}_C = M_C$.
\end{proof}

For establishing generic properties of the PM model, we will often consider the particular choice of the exchangabilities given by the matrix $R=\mathds 1$
whose entries are all 1. This is in essence the CAT-F81 model \citep{LP04,QGL2008}, with the number of profiles some fixed $m$. For this $R$, a Markov matrix
has the form given in equation \eqref{eq:specialMatrix} of Definition \ref{def:specialMarkovMatrix}.

\begin{lemma} \label{lem:matForm} 
	Consider the PM model $PM(T,\kappa,m)$ with $R = \mathds{1}$, and let $e$ be a branch of $T$ of length 1.  
	Then for a single class $c$ with profile $\boldsymbol \pi$ and rate $r \ge 0$, 
	the Markov matrix $M^e_c = \exp(Q_c r)$ for $e$ is of the form
	$M(a_1, \dots, a_\kappa)$ of Definition \ref{def:specialMarkovMatrix}, with $a_i=\pi_i(1-e^{-r})\geq 0$ and $s=\sum_{i=1}^\kappa a_i$ 
	satisfying $0\le s< 1$.	
	
	Conversely, any $\kappa \times \kappa$ Markov matrix of the form $M = M(a_1,\dots,a_\kappa)$ 
	with  $a_j \ge 0$ and $0 \le s < 1$ 
	comes from a choice of parameters for one class of the PM model with $R = \mathds{1}$ on an edge of length 1. 
	
	Provided $s\ne 0$ (equivalently $r\ne 0$), this correspondence is one-to-one.
	\end{lemma}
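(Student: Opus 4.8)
The plan is to verify the claimed correspondence in three parts, following the structure of the statement: the forward direction (computing $\exp(Q_c r)$ explicitly), the converse direction (showing any $M(a_1,\dots,a_\kappa)$ with the stated constraints arises from valid PM parameters), and the bijectivity claim when $s \ne 0$.

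\emph{Forward direction.} First I would compute $Q_c = R\,\diag(\boldsymbol\pi)$ for $R = \mathds 1$, corrected so that row sums are zero. With $R = \mathds 1$, the off-diagonal entry of $Q_c$ in row $i$, column $j$ is $\pi_j$, so the diagonal entry in row $i$ must be $-\sum_{j \ne i}\pi_j = \pi_i - 1$ (using $\sum_j \pi_j = 1$). Thus $Q_c = \mathds 1 \,\diag(\boldsymbol\pi) - I$, i.e. $Q_c = \mathbf 1^T\boldsymbol\pi - I$ where $\mathbf 1$ is the all-ones row vector. The key observation is that $\mathbf 1^T \boldsymbol\pi$ is a rank-one projection-like matrix: $(\mathbf 1^T\boldsymbol\pi)^2 = \mathbf 1^T(\boldsymbol\pi\mathbf 1^T)\boldsymbol\pi = \mathbf 1^T\boldsymbol\pi$ since $\boldsymbol\pi\mathbf 1^T = \sum_j\pi_j = 1$. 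Hence $Q_c$ is diagonalizable with eigenvalue $0$ (on $\mathbf 1^T\boldsymbol\pi$'s image) and eigenvalue $-1$ (on its kernel), and $\exp(Q_c r)$ can be written in closed form: writing $P_0 = \mathbf 1^T\boldsymbol\pi$, one gets $\exp(Q_c r) = \exp(-rI)\exp(rP_0) = e^{-r}\big(I + (e^{r}-1)P_0\big) = e^{-r}I + (1 - e^{-r})\mathbf 1^T\boldsymbol\pi$. Reading off entries: the $(i,j)$ entry for $i \ne j$ is $(1-e^{-r})\pi_j = a_j$, and the $(i,i)$ entry is $e^{-r} + (1-e^{-r})\pi_i = 1 - (1-e^{-r}) + (1-e^{-r})\pi_i = 1 + a_i - s$ where $s = \sum_j a_j = (1-e^{-r})\sum_j\pi_j = 1 - e^{-r}$. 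This is exactly the form \eqref{eq:specialMatrix}. The constraints are immediate: $a_i = \pi_i(1-e^{-r}) \ge 0$ since $\pi_i \ge 0$ and $r \ge 0$, and $s = 1 - e^{-r} \in [0,1)$ for $r \in [0,\infty)$.

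\emph{Converse direction.} Given $M = M(a_1,\dots,a_\kappa)$ with $a_j \ge 0$ and $0 \le s < 1$, I would invert the formulas: set $r = -\ln(1-s)$, which is well-defined and non-negative precisely because $0 \le s < 1$, and set $\pi_i = a_i/s$ when $s \ne 0$. Then $\sum_i\pi_i = 1$ and $\pi_i \ge 0$, so $\boldsymbol\pi$ is a valid profile; one checks $\pi_i(1-e^{-r}) = \pi_i s = a_i$, recovering $M$ via the forward computation. When $s = 0$, all $a_i = 0$ so $M = I$, which is $\exp(Q_c \cdot 0)$ for any profile $\boldsymbol\pi$ (here $r = 0$), so the correspondence holds but is not unique — consistent with the final sentence of the statement.

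\emph{Bijectivity.} Provided $s \ne 0$ (equivalently $r \ne 0$, since $s = 1 - e^{-r}$), the maps $(\boldsymbol\pi, r) \mapsto (a_1,\dots,a_\kappa)$ and its inverse $a_i \mapsto (a_i/s,\ -\ln(1-s))$ are mutually inverse bijections between $\{\boldsymbol\pi : \pi_i \ge 0,\ \sum\pi_i = 1\} \times (0,\infty)$ and $\{(a_1,\dots,a_\kappa) : a_i \ge 0,\ 0 < s < 1\}$, so the correspondence is one-to-one. I expect \textbf{no serious obstacle} here — the only mild subtlety is keeping the normalization conventions straight (the edge has length $1$, so the rate $r$ absorbs the full branch, and $R = \mathds 1$ means $Q_c$ is \emph{not} normalized to unit expected substitution rate, which is why $s$ ranges over $[0,1)$ rather than being pinned down). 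The heart of the argument is simply the rank-one-perturbation structure of $Q_c$ when $R = \mathds 1$, which makes the matrix exponential elementary.
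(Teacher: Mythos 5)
Your proposal is correct and follows essentially the same route as the paper: a direct computation of $\exp(rQ_c)$ from the spectral structure of $Q_c = \mathds 1\,\diag(\boldsymbol\pi) - I$ (the paper lists the eigenvectors with eigenvalues $0$ and $-1$, while you equivalently use the idempotent rank-one matrix $P_0$ to get the closed form $e^{-r}I + (1-e^{-r})\mathbf 1^T\boldsymbol\pi$), followed by the identical inversion $r=-\ln(1-s)$, $\pi_i=a_i/s$ for the converse and the same treatment of the degenerate case $s=0$. Your version is, if anything, slightly more explicit than the paper's, which leaves the exponential computation to the reader.
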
 
\begin{proof}
	The first statement follows by direct computation: With $\mathbf e_j$ the standard basis vectors, $Q_c=R\diag(\boldsymbol \pi)-I$ has right eigenvectors  $-\pi_j\mathbf e_1+\pi_1\mathbf e_j$ with eigenvalues  $-1$ for $2\le j\le \kappa$, and eigenvector $\sum_{j=1}^\kappa \mathbf e_j$ with eigenvalue $0$. 
	
	For the converse, 
	since $0\le s< 1$, there is a unique $r\ge 0$ such that $s=1-e^{-r}$. If $s>0$, 
	let $\pi_j={a_j}/{s}$ for $j=1,\cdots,\kappa$,
	and $\boldsymbol \pi = (\pi_j)$. Then $\sum_{j=1}^{\kappa}\pi_j=1$, 
	and $a_j=\pi_j(1-e^{-r})$. With these choices
	$Q=R \diag ( \boldsymbol \pi)-I$, and $M=\exp(rQ)$.
	If $s=0$, then all the $a_j$ are zero, and $M$ is the identity matrix.  
	Take $r=0$ and $\boldsymbol \pi$ arbitrary.  Then $M = \exp( 0Q )$. 
\end{proof}

\section{Identifiability of Parameters for the Profile Mixture Model}\label{sec:MainThm}

With preliminaries completed, we now turn to establishing our main result, on generic parameter 
identifiability for the PM model. 
 The first step is to understand that the ranks of matrix flattenings of a model distribution 
 are affected by whether the associated split is, or is not, displayed on the tree $T$.
 
\begin{prop} \label{prop:splitRank}
	Let $T$ be an $n$-taxon tree on $X$ and $P$ a distribution from the model PM= PM $(T, \kappa, m)$ with $\kappa=20$ and $m<74$. Suppose that $A|B$ is a split of $X$ with $|A|,|B|\geq 3$.  %and $m <\min (\kappa^{ |A|-1}, \kappa^{ |B|-1} )$.
	\begin{enumerate}
		\item[(1)] If $A|B$ is displayed on $T$, then $\operatorname{Flat}_{A|B}(P)$ has rank at most $m\kappa$;
		\item[(2)] If $A|B$ is not displayed on $T$, then $\operatorname{Flat}_{A|B}(P)$ generically has rank greater than $m\kappa$.  
	\end{enumerate}
\end{prop}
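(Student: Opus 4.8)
The plan is to treat the two parts separately; part (1) is a short direct computation, and part (2) carries the content.

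For (1), let $e$ be an edge displaying $A|B$ and $u$ its endpoint on the $A$-side. Since each class of a PM model is a reversible GTR model, its tree parameter is \emph{de facto} unrooted, so by Definition \ref{def:PM_lambda_kappa} we may root $T$ at $u$ and view the distribution as coming from a Markov model with $m\kappa$ states at each internal node. Conditioning on the state at $u$ then factors the flattening as $\flat_{A|B}(P)=M_A^{T}\diag(\boldsymbol\Pi)M_B$, where $M_A$ and $M_B$ are the $m\kappa\times\kappa^{|A|}$ and $m\kappa\times\kappa^{|B|}$ matrices of conditional leaf-state probabilities given the state at $u$; the common inner dimension $m\kappa$ bounds the rank. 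Equivalently, one may refine $A|B$ to a tripartition at $u$ and apply Lemma \ref{lem:vertexParams}, noting that any flattening of a bracket product with $m\kappa$-row factors has rank at most $m\kappa$.

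For (2), Proposition \ref{prop:genericProp} reduces the claim to exhibiting a \emph{single} choice of PM parameters with $\flat_{A|B}(P)$ of rank exceeding $m\kappa$: rank $>m\kappa$ is the nonvanishing of some $(m\kappa+1)\times(m\kappa+1)$ minor, a polynomial in the entries of $P$ and hence analytic in the parameters, so nonvanishing at one point forces generic nonvanishing. I would take $R=\mathds 1$, so that by Lemma \ref{lem:matForm} every Markov matrix along $T$ has the special form $M(a_1,\dots,a_\kappa)$ of Definition \ref{def:specialMarkovMatrix}, with the $a_i$ free nonnegative parameters. The combinatorial input is that, since $A|B$ is not displayed and $|A|,|B|\ge 3$, there are four leaves $a_1,a_2\in A$, $b_1,b_2\in B$ whose induced quartet on $T$ is not $a_1a_2\,|\,b_1b_2$; equivalently there is an edge $v_1v_2$ of $T$ (a genuine path, since $T$ is binary) such that the subtree on each side contains both an $A$-leaf and a $B$-leaf. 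Fixing the states of all leaves outside a small set $S$ (containing these four and one further leaf on each side) to generic constants yields a submatrix of $\flat_{A|B}(P)$, which is itself the flattening of a PM-type model on the minimal subtree spanning $S$ with the clamped-leaf data absorbed into its numerical parameters. Rooting that small tree at $v_1$ (or at an endpoint of the central path) and using that the pendant Markov matrices are now row-tensor products of matrices $M(\cdot)$, the submatrix takes the form $M_2^{T}L\,M_1$ with $M_1$ a stack of $m$ blocks of type $M(\cdot)^{\otimes^2_r}\otimes M(\cdot)$, $M_2$ a stack of $m$ blocks of type $M(\cdot)\otimes M(\cdot)^{\otimes^2_r}$, and $L$ a positive diagonal $m\kappa^2\times m\kappa^2$ matrix (in borderline configurations it collapses to a single stacked matrix $M$ of the shape in Proposition \ref{prop:rankWrongSplit}). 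Proposition \ref{prop:rankJD}, whose hypothesis $m<74$ is exactly that of the statement, then gives that this submatrix, hence $\flat_{A|B}(P)$, has rank greater than $m\kappa$ for generic $a_i$; choosing one such $a_i$ finishes the proof.

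The main obstacle is the combinatorial bookkeeping in (2): one must classify how $A$ and $B$ can interleave around the non-separating edge of $T$, check that in every case enough extra leaves are available to force the flattening to factor through the $m\kappa^2$-dimensional joint state space of the two endpoints of that edge rather than through the $m\kappa$-dimensional state space of a single vertex, and verify that after clamping the extraneous leaves and re-rooting, the surviving factor matrices really are built from the blocks $M(\cdot)^{\otimes^2_r}\otimes M(\cdot)$ and its transpose-type variant, with no degeneration that would break the rank bounds of Propositions \ref{prop:rankWrongSplit} and \ref{prop:rankJD}. Keeping track of which leaves of $S$ are pendant in the small tree, and of where the path $v_1v_2$ sits, is where essentially all the care goes.
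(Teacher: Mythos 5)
Your part (1) is the paper's argument in all but cosmetic detail, and the skeleton of your part (2) --- reduce via Proposition \ref{prop:genericProp} to exhibiting one good parameter point, take $R=\mathds 1$ so Lemma \ref{lem:matForm} applies, locate an edge $v_1v_2$ with an $A$-leaf and a $B$-leaf on each side, cut down to a six-leaf configuration with three leaves on each side of the flattening, and invoke the \pari-based Propositions \ref{prop:rankWrongSplit} and \ref{prop:rankJD} --- is exactly the paper's. But the step you defer to ``combinatorial bookkeeping'' contains the real gap: you never actually construct the parameter point. The paper's proof hinges on a specific degeneration --- all internal edges except $v_1v_2$ get length $0$, and all terminal edges and $v_1v_2$ get length $1$ --- so that the tree collapses to two stars joined at $v_1,v_2$, every leaf on a given side sees the \emph{same} per-class matrix $M(a_1,\dots,a_\kappa)$, and the flattening literally factors as a stack of blocks $M(\cdot)^{\otimes_r^2}\otimes M(\cdot)$ (resp.\ $M(\cdot)\otimes M(\cdot)^{\otimes_r^2}$) around a positive diagonal middle factor. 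Propositions \ref{prop:rankWrongSplit} and \ref{prop:rankJD} are stated only for blocks in which the same $M(a_1,\dots,a_\kappa)$ appears in all three tensor factors; without the branch-length specialization your pendant matrices are products along paths of different lengths, i.e.\ blocks of the form $M_1\otimes_r M_2\otimes M_3$ with distinct factors, and the cited rank computations do not apply to them.

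Relatedly, your claim that clamping the extraneous leaves yields ``a PM-type model on the minimal subtree with the clamped-leaf data absorbed into its numerical parameters'' is not correct as stated: fixing a leaf's state inserts a class-dependent diagonal factor $\diag(M^c(\cdot,j))$ at its attachment node, which is not a PM parameter and does not commute through the path matrices; it can only be absorbed into the central diagonal matrix $L$ after the internal edges have been collapsed as above. (Also, leaf states are discrete, so ``generic constants'' is not meaningful; one fixes specific states and must then check positivity of the resulting diagonal entries.) The paper sidesteps all of this by \emph{marginalizing} rather than clamping: summing over the states of extra leaves deletes them, the marginal is genuinely a PM distribution on the induced six-taxon subtree, and the rank comparison is immediate from $J_1\operatorname{Flat}_{A|B}(P)J_2=\operatorname{Flat}_{\widehat A|\widehat B}(\widehat P)$. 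Finally, a small mis-attribution: the symmetric case $N_A=N_B$ is not a degenerate instance of Proposition \ref{prop:rankJD} but is handled by a Gram-matrix argument reducing to Proposition \ref{prop:rankWrongSplit} (valid for $m<77$); the binding hypothesis $m<74$ enters only through Proposition \ref{prop:rankJD} in the asymmetric case. So: right strategy, but the proof is incomplete until you specify the degenerate branch lengths (or switch to marginalization) and verify the factor matrices have exactly the forms the rank propositions require.
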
 

Before beginning the proof, we present a simplified example to illustrate how
the matrix rank of flattenings of joint distributions from Markov models
on trees carries information about the absence/presence of an internal edges
on $T$. 

\begin{example*}  Consider a single-class 2-state Markov model
	on the $4$-taxon tree shown in Figure \ref{fig:4taxTree}.  A special case of this model is $PM(T,2,1)$.  
	 The joint distribution
	of states at the leave of $T$ is the $2 \times 2 \times 2 \times 2$ array $P$,
	with entries $p_{ijkl}$ indexed by leaves in the order $a, b, c, d$.
	
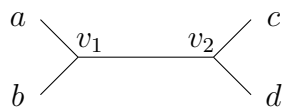
\begin{figure}[h] 
	\centering
	\begin{tikzpicture}[x=1.5cm, y=1.5cm]
	\coordinate[] (lsplit) {};
	\coordinate[above left=0.5cm and 0.5cm of lsplit] (esplit) {};
	\coordinate[below left=0.5cm and 0.5cm of lsplit] (d) {};
	\coordinate[right=1.5cm and 1.8cm  of lsplit] (rsplit) {};
	\coordinate[above right=0.5cm and 0.5cm of rsplit] (b) {};
	\coordinate[below right=0.5cm and 0.5cm of rsplit] (c) {};

	\foreach \a/\b in {lsplit/rsplit,d/lsplit,b/rsplit,c/rsplit,esplit/lsplit}
	\draw (\a) -- (\b);
	
	\begin{scope}[node distance=0.3cm]
	\node[left of=esplit] {$a$};
	\node[right of=b] {$c$};
	\node[right of=c] {$d$};
	\node[left of=d] {$b$};
	\node[above right=-0.1 cm of lsplit, xshift=-0.1cm] {$v_1$};
	\node[above left=-0.1 cm of rsplit, xshift=0.1cm] {$v_2$};
	\end{scope}  
	\end{tikzpicture}
	\caption{A 4-taxon tree with split $\{a,b\}|\{c,d\}$.} \label{fig:4taxTree}  
\end{figure}

With $A=\{a,b\}$ and $B=\{c,d\}$, 
the rows and columns of $\operatorname{Flat}_{A|B}(P)$ are indexed by elements of
$[2] \times [2]$.  For example, the $\left ( (1,2), \, (1,1) \right)$ entry is $p_{1211}$.
In contrast,  if $A' = \{a, c\}$ and $B' = \{b, d\}$, the flattening $\operatorname{Flat}_{A'|B'}(P)$ has 
$\left ( (1,2), \, (1,1) \right)$-entry
is $p_{1121}$.

Now suppose that the terminal edges of 
$T$ have length $0$, so that the states at $a$ and $b$ must agree, as must those
at $c$ and $d$, since no substitutions occur on terminal edges. 
Then the matrix $\operatorname{Flat}_{A|B}(P)$ arises from the joint distribution 
of states at the internal nodes $v_1$ and $v_2$, and its only non-zero entries are $p_{iijj}$. Thus 
the matrix flattening for the split $A|B$ displayed by $T$ has form
$$\operatorname{Flat}_{A|B}(P)=
\bordermatrix{
	&(1,1)&(1,2)&(2,1)&(2,2)\cr
	(1,1) & p_{1111} &0&0&p_{1122} \cr
	(1,2) & 0&0&0&0 \cr
	(2,1) & 0&0&0&0 \cr
	(2,2) & p_{2211}&0&0&p_{2222}
},
$$
with rank at most $2=m\kappa$. 

In contrast, the flattening for the split $A'|B'$ not displayed on $T$ has form
$$
\operatorname{Flat}_{A'|B'}(P)=
\bordermatrix{
	&(1,1)&(1,2)&(2,1)&(2,2)\cr
	(1,1) & p_{1111} &0&0&0 \cr
	(1,2) & 0&p_{1122}&0&0 \cr
	(2,1) & 0&0&p_{2211}&0 \cr
	(2,2) & 0&0&0&p_{2222}
},
$$
which generically has rank $4=(m\kappa)^2 > m\kappa$. 

If the terminal edges of $T$ are of positive length, then
the resulting joint distribution $P$ can be obtained by a simple
and generically rank-preserving linear action on the rows and columns of the flattenings above.
Thus, flattenings
respecting the topology of $T$ generically have rank $m\kappa$
while those that do not generically have larger rank.

\end{example*}

\begin{proof}[Proof of Proposition \ref{prop:splitRank}]
	
	To show claim $(1)$, suppose the split $A|B$ is displayed on $T$ with 
	associated edge $e=(v_A,v_B)$.  Let $M_A$ be the $m\kappa\times\kappa^{|A|}$ matrix 
	and $M_B$ the $m\kappa\times\kappa^{|B|}$ matrix
	giving the conditional probabilities
	of jointly observing states at $A$ and $B$, conditioned on states at $v_A$ and $v_B$ respectively.
	Then, by rooting the tree at $v_A$ and letting $M^e$ denote the $m\kappa \times m\kappa$
	Markov matrix associated to $e$, the joint distribution of $(v_A, v_B)$ is 
	$\diag(\boldsymbol{\Pi}) M^e$ and it follows that
	$$
	\operatorname{Flat}_{A|B}(P)=M_A^T \diag(\boldsymbol{\Pi}) M^e M_B.
	$$
	Since $\rank(M^e ) \le m\kappa$, it follows that $\operatorname{Flat}_{A|B}(P)$ has rank at most $m\kappa$. 
	
	For claim (2), suppose now $A|B$ is not displayed on $T$.
	Let $V$ be the variety of matrices of size $\kappa^{\vert A \vert} \times \kappa^{\vert B \vert}$
	with rank at most $m\kappa$, defined by the set of all $(m\kappa+1)\times (m\kappa+1)$ minors. 
	By Proposition \ref{prop:genericProp}, it suffices to find a single choice of $PM(T,\kappa,m)$ parameters 
	that produces a point off $V$, as the parameterization extends to a complex analytic function.
	
	Since $T$ does not display $A|B$, by Theorem 3.8.6 of \cite{SempleSteel}, there is an edge $e=(v_1,v_2)$ 
	of $T$ with associated split $C|D$ such that $A'=A\cap C$, $A''=A\cap D$, $B'=B\cap C$, $B''=B\cap D$ 
	are all non-empty.  To find the needed choice of parameters, fix all internal edges of $T$ except 
	$e$ to have length $0$, so the Markov matrices on these edges are $I$, and fix the edge lengths of all 
	terminal edges and $e$ to be $1$.  See Figure \ref{fig:treeSplitNotDisplayed}. 
	Take $R = \mathds 1$ and mixing weights $w_i=1/m$ to be uniform.
	Values for the parameters $\boldsymbol{\pi}_i, r_i$ will be specified later in the argument. 
For this choice of parameters, $T$ is formed by joining two star trees at the ends of $e$.

	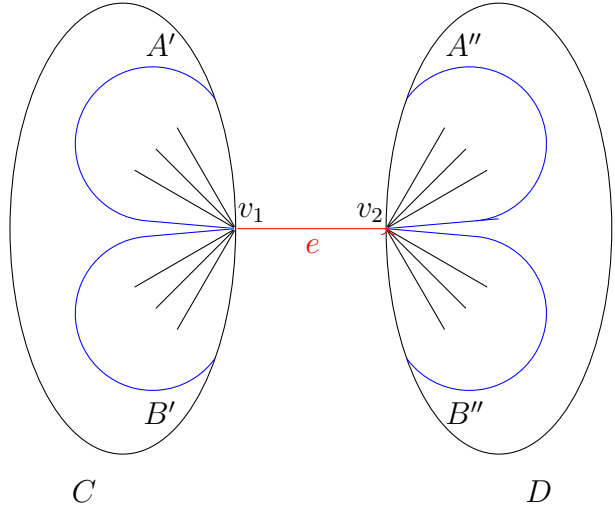
\begin{figure}[h]
		\centering
		\begin{tikzpicture}
		\pgfmathsetmacro{\is}{10}
		\draw (-2.5,0) circle (1.5cm and 3cm);
		\node[] (e0) {};
		\node[left of=e0,inner sep=0cm] (v1) {} [grow=north west,sibling distance=0.4cm]
		child
		child
		child
		;
		\node[left of=e0,inner sep=0] (v1t) {} [grow=south west,sibling distance=0.4cm]
		child
		child
		child
		;
		\coordinate (v1a1) at ($(v1)+(175:\tikzleveldistance)$);
		\node[right=1.5cm of v1, xshift=1.5cm] (v2) {};

		\begin{scope}
		\clip (v1) -- (-4,0) arc (180:30:1.5cm and 3cm) -- (v1); %cheap cheat on clipping
		\begin{scope}[rotate=-5]
		\coordinate (v1a2) at ($(v1)+0.8*(-\tikzleveldistance,0)$);
		
		\draw[blue] let \p1=($(v1-2)-(v1a1)$) in (v1a2) arc (270:0:{veclen(\x1,\y1)});
		\end{scope}
		\draw[blue] (v1) -- (v1a2);
		\end{scope}

		\coordinate (v1tb1) at ($(v1t)+(185:\tikzleveldistance)$);
		
		\begin{scope}
		\clip (v1t) -- (-4,0) arc (180:360:1.5cm and 3cm) -- (v1t); %cheap cheat on clipping
		\begin{scope}[rotate=5]
		\coordinate (v1tb2) at ($(v1t)+0.8*(-\tikzleveldistance,0)$);
		
		\draw[blue] let \p1=($(v1t-2)-(v1tb1)$) in (v1tb2) arc (90:360:{veclen(\x1,\y1)});
		\end{scope}
		\draw[blue] (v1t) -- (v1tb2);
		\end{scope}

		\draw (2.5,0) circle (1.5cm and 3cm);
		
		\node[right of=e0,inner sep=0cm] (v2) {} [grow=north east,sibling distance=0.4cm]
		child
		child
		child
		;
		\node[right of=e0,inner sep=0] (v2t) {} [grow=south east,sibling distance=0.4cm]
		child
		child
		child
		;
		\coordinate (v2a1) at ($(v2)+(5:\tikzleveldistance)$);
		
		\begin{scope}
		\clip (v2) -- (4,0) arc (0:150:1.5cm and 3cm) -- (v2); %cheap cheat on clipping
		\begin{scope}[rotate=5]
		\coordinate (v2a2) at ($(v2)+0.8*(\tikzleveldistance,0)$);
		
		\draw[blue] let \p2=($(v2-2)-(v2a1)$) in (v2a2) arc (-90:180:{veclen(\x2,\y2)});
		\end{scope}
		\draw[blue] (v2) -- (v2a1);
		
		\end{scope}

		\coordinate (v2tb1) at ($(v2t)+(-5:\tikzleveldistance)$);
		
		\begin{scope}
		\clip (v2t) -- (4,0) arc (360:180:1.5cm and 3cm) -- (v2t); %cheap cheat on clipping
		\begin{scope}[rotate=-5]
		\coordinate (v2tb2) at ($(v2t)+0.8*(\tikzleveldistance,0)$);
		
		\draw[blue] let \p1=($(v2tb1)-(v2t-2)$) in (v2tb2) arc (90:-180:{veclen(\x1,\y1)});
		\end{scope}
		\draw[blue] (v2t) -- (v2tb2);
		\end{scope}

		\begin{scope}[node distance=0.3cm]
		%\node[above right=0.1cm of v1, xshift=1cm] {$e$};
		\node[above right of=v1] {$v_1$};
		\node[above left of=v2] {$v_2$};
		\node[above left=3cm of v1, xshift=1.5cm]  {$A'$};
		\node[below left=3cm of v1, xshift=1.5cm]  {$B'$};
		\node[below left=4.5cm of v1, xshift=1.5cm]  {$C$};
		\node[above right=3cm of v2, xshift=-1.5cm]  {$A''$};
		\node[below right=3cm of v2, xshift=-1.5cm]  {$B''$};
		\node[below right=4.5cm of v2, xshift=-1.5cm]  {$D$};
		%\node[above right=.1cm of lsplit, xshift=0.1cm] {$\lambda_{2}$};
		\end{scope}
		\draw[->,red] (v1) -| node[near start,below] {$e$} (v2);
		%\path (v1) -- (v2);	
		\end{tikzpicture}
		\caption{A tree $T$ which does not display the split $A|B$, but displays the split $C|D$ such that $A'=A\cap C$, $A''=A\cap D$, $B'=B\cap C$, $B''=B\cap D$ are all non-empty.} \label{fig:treeSplitNotDisplayed} 
	\end{figure}
	
	Taking $r=v_1$ to be the root of $T$, let $K=\text{diag}(\boldsymbol \Pi) M^{e}$ be the $m\kappa\times m\kappa$ 
	block diagonal matrix which is the joint distribution of classes and states at $v_1$ and $v_2$. 
	The probabilities of observing states $\boldsymbol{i}$, $\boldsymbol{j}$, $\boldsymbol{k}$, $\boldsymbol{l}$ at 
	leaves in $A'$, $B'$, $A''$, $B''$ respectively, $P( \boldsymbol{i}, \boldsymbol{j}, \boldsymbol{k}, \boldsymbol{l})$, 
	are the entries of a $\kappa^{|A'|}\times \kappa^{|B'|}\times\kappa^{|A''|}\times\kappa^{|B''|}$ tensor. 
	
	Define a $m\kappa\times m\kappa\times m\kappa\times m\kappa$ tensor $\overline Q$,
	
	$$
	\overline Q( i,j, k, l)=
	\begin{cases}
	K( i, k)& i=j, \,k=l,\\
	0 & \text{otherwise}.
	\end{cases}
	$$ 
	 The tensor $\overline Q$ is the joint distribution of states at the leaves of the tree $T$ 
	 of Figure \ref{fig:treeSplitNotDisplayed} 
	 when terminal edges have length zero and $A', B', A'', B''$ are single taxa. 
	 Indeed, since 
	 $A|B$ is not displayed on $T$, the matrix $\widehat{Q} = \operatorname{Flat}_{A|B}(\overline Q)$ is 
	 $(m\kappa)^2\times(m\kappa)^2$ with entries 
	$$
	\widehat{Q}\left((i, j),(k, l)\right)=\overline Q(i, k, j, l).
	$$
	Since $K$ is block diagonal, $\widehat Q$ has at most $m\kappa^2$
	nonzero entries, all appearing on the diagonal, and $\widehat Q$ is
	generically of rank $m\kappa^2$. 
	
	To see that in the general case $\operatorname{Flat}_{A|B}(P)$ has a similar structure, let 
	$N_A=M_{A'} \otimes M_{A''}$ and $N_B=M_{B'} \otimes M_{B''}$ where $M_{A'}, M_{A''}, M_{B'}, M_{B''}$ 
	are given as in equation \eqref{eq:matrixStarTree} of Definition \ref{def:starMatrixParam}. Then
	\begin{equation} \label{eq:flatWrongSplit} 
	\operatorname{Flat}_{A|B}(P)=N_A^T \, \widehat{Q} \, N_B.
	\end{equation}

	\begin{figure}[h]
		\centering
		(a) \begin{tikzpicture}[x=1.5cm, y=1.5cm]
		\coordinate[] (lsplit) {};
		\coordinate[above left=0.9cm and 0.5cm of lsplit] (esplit) {};
		\coordinate[above left=0.7cm and 0.7cm of lsplit] (g) {};
		\coordinate[below left=0.9cm and 0.5cm of lsplit] (d) {};
		\coordinate[below left=0.7cm and 0.7cm of lsplit] (h) {};
		\coordinate[right=1.5cm and 1.8cm  of lsplit] (rsplit) {};
		\coordinate[above right=0.9cm and 0.5cm of rsplit] (b) {};
		\coordinate[below right=0.9cm and 0.5cm of rsplit] (c) {};
		
		\foreach \a/\b in {lsplit/rsplit,d/lsplit,b/rsplit,c/rsplit,esplit/lsplit,lsplit/g,lsplit/h}
		\draw (\a) -- (\b);
		
		\begin{scope}[node distance=0.5cm]
		\node[above right=-0.1 cm of lsplit, xshift=-0.1cm] {$v_1$};
		\node[above left=-0.1 cm of rsplit, xshift=0.1cm] {$v_2$};
		\node[below right=0.1 cm of lsplit, xshift=0.7cm] {$e$};
			\node[above right=-0.1 cm of b, xshift=0.1cm, yshift=-0.1cm] {$A''$};
			\node[below right=-0.1 cm of c, xshift=0.1cm] {$B''$};
			\node[above left=0.1 cm of g, xshift=0.1cm, ] {$A'$};
			\node[below left=0.1 cm of d, xshift=0.1cm, yshift=0.1cm] {$B'$};
		\end{scope}  
		\end{tikzpicture}
		\hskip 2.0cm
		(b)\begin{tikzpicture}[x=1.5cm, y=1.5cm]
		\coordinate[] (lsplit) {};
		\coordinate[above left=0.9cm and 0.5cm of lsplit] (esplit) {};
		\coordinate[above left=0.7cm and 0.7cm of lsplit] (g) {};
		\coordinate[below left=0.9cm and 0.5cm of lsplit] (d) {};
		\coordinate[right=1.5cm and 1.8cm  of lsplit] (rsplit) {};
		\coordinate[above right=0.9cm and 0.5cm of rsplit] (b) {};
		\coordinate[below right=0.9cm and 0.5cm of rsplit] (c) {};
		\coordinate[below right=0.7cm and 0.7cm of rsplit] (h) {};

		\foreach \a/\b in {lsplit/rsplit,d/lsplit,b/rsplit,c/rsplit,esplit/lsplit,lsplit/g,rsplit/h}
		\draw (\a) -- (\b);
		
		\begin{scope}[node distance=0.5cm]
		\node[above right=-0.1 cm of lsplit, xshift=-0.1cm] {$v_1$};
		\node[above left=-0.1 cm of rsplit, xshift=0.1cm] {$v_2$};
		\node[below right=0.1 cm of lsplit, xshift=0.7cm] {$e$};
			\node[above right=-0.1 cm of b, xshift=0.1cm,yshift=-0.1cm] {$A''$};
		\node[below right=-0.1 cm of c, xshift=0.1cm] {$B''$};
		\node[above left=0.1 cm of g, xshift=0.1cm] {$A'$};
		\node[below left=0.1 cm of d, xshift=0.1cm, yshift=0.1cm] {$B'$};
		\end{scope}  
		\end{tikzpicture}
		\caption{Trees with (a) $|A'|=|B'|=2$ and $|A''|=|B''|=1$, and (b)  $|A'|=|B''|=2$ and $|A''|=|B'|=1$		
		.} \label{fig:treeAB}  
	\end{figure}
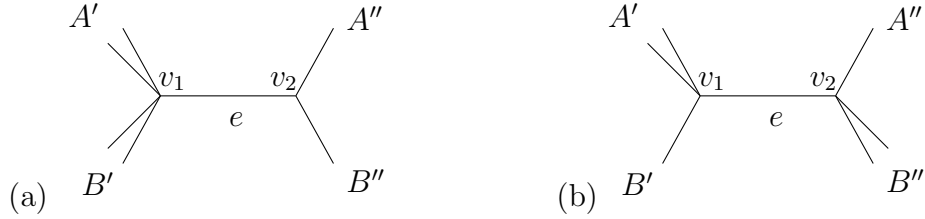
	
	We  now establish that claim $(2)$ holds when $|A|=|B|=3$, so the tree is one of those shown in Figure \ref{fig:treeAB}.
	 Suppose first that $|A'|=|B'|=2$ and 
       $|A''|=|B''|=1$, as shown for tree (a) of the figure. In this case $N_A=N_B$. 
	Since $\widehat Q$ is diagonal with at most $m\kappa^2$ non-zero entries due to the block structure of $K$, 
	in equation \eqref{eq:flatWrongSplit} we can replace 
	$\widehat{Q}$ by a diagonal $m\kappa^2 \times m\kappa^2$ matrix $Q$ by eliminating
	zero rows and columns.  To do this, we must also replace $N_A = N_B$ with an $m\kappa^2\times\kappa^3$ matrix
	$N$ formed by taking tensor products of the individual class components
of $M_{A'} = M^{\otimes^2_r}$ and $M_{A^{''}} = M$ and then restacking.
To be concrete, for class $c$ the Markov matrix for a terminal edge is $M^c = M(a^c_1, \dots, a^c_\kappa)$
by Lemma \ref{lem:matForm}, and $N$ is formed by stacking $m$ matrices $(M^c)^{\otimes^2_r} \otimes M^c$.
	
Since $Q$ is diagonal with generically positive entries, using equation \eqref{eq:flatWrongSplit} we have that
	$$
	\operatorname{Flat}_{A|B}(P)= \left(N^T Q^{1/2} \right) \, \left( Q^{1/2}  N \right) = \Lambda^T \, \Lambda, 
	$$
where $\Lambda= Q^{1/2} N$.  By the singular value decomposition, it follows  that 
	$$
	\rank(\Lambda^T \Lambda)=\rank(\Lambda)=\rank(N).
	$$
The \pari calculation presented in Proposition \ref{prop:rankWrongSplit}, together with Proposition \ref{prop:genericProp} 
show that $\rank(N)> m\kappa$ generically, and thus for generic $\boldsymbol \pi_i$ and $r_i$
it follows that $\rank(\operatorname{Flat}_{A|B}(P))>m\kappa$.
	
	Now continuing with $|A|=|B|=3$ suppose that $|A'|=|B''|=2$ and $|A''|=|B'|=1$, as shown by Figure \ref{fig:treeAB}(b). The previous 
	argument fails for this tree 
	because now $N_A\neq N_B$, as the tensor products defining these matrices, are taken 
	in different orders.  However, a more complicated \pari calculation, presented as 
	Proposition \ref{prop:rankJD}, shows that $\operatorname{Flat}_{A|B}(P)$ generically has 
	rank greater than $m\kappa$ in this case.  
	
	Finally, for the general case of $|A|,|B|\geq3$, take $\widehat{A}$ to be a 
	$3$-element subset of $A$ with at least one element from $A'$ and one from $A''$, and similarly 
	take $\widehat{B}$ to be a $3$-element subset of $B$ with at least one element from $B'$ and from 
	$B''$. Let $\widehat{P}$ be the probability distribution for the taxa $\widehat{A}\cup\widehat{B}$. 
	Since the row indices of $\operatorname{Flat}_{A|B}(P)$ depend on the states at the taxa in $A$ and 
	the column indices depend on the states at the taxa in $B$, marginalizing over all possible states 
	for the taxa in $A$ which are not in $\widehat{A}$, and similarly for $B$,  gives the matrix
 $\operatorname{Flat}_{\widehat{A}|\widehat{B}}(\widehat{P})$. There exist matrices, $J_1, J_2$ which 
	perform this marginalization on $\operatorname{Flat}_{A|B}(P)$, 
	$$J_1 \operatorname{Flat}_{A|B}(P) J_2= \operatorname{Flat}_{\widehat{A}|\widehat{B}}(\widehat{P}).
	$$
	Since $\operatorname{Flat}_{\widehat{A}|\widehat{B}}(\widehat{P})$ generically has rank greater than 
	$m\kappa$ and $\operatorname{Flat}_{A|B}(P)$ has rank greater than 
	or equal to $\operatorname{Flat}_{\widehat{A}|\widehat{B}}(\widehat{P})$ by this equation, 
	it follows that $\operatorname{Flat}_{A|B}(P)$ generically has rank greater than $m\kappa$.  
\end{proof}

As a consequence of Proposition \ref{prop:splitRank}, from a distribution $P$ computed from generic 
PM model parameters we can identify every edge in the tree for which there are at least three taxa on either side, by computing ranks of flattenings
of $P$.
In the following, 
we see that Proposition \ref{prop:splitRank} also helps to identify at least one tripartition on the tree.

\begin{prop}   \label{prop:IDtripartion} 
	Let $T$ be an $n$-taxon tree on $X$ with $n \ge 9$, and $P$ a joint distribution from 
	generic parameters for the model $PM(T, \kappa, m)$ with $\kappa=20$ and $m < 74$.  Then there is at least one 
	tripartition $A|B|C$ displayed on $T$, with $|A|,|B|\ge 3$, which can be identified from $P$.
\end{prop}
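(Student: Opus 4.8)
The plan is to read off from $P$ the set of ``large'' splits that $T$ displays, and then to extract from that set, by a purely combinatorial argument, a displayed tripartition with two parts of size at least $3$. Call a split $A|B$ of $X$ \emph{large} if $|A|,|B|\ge 3$. By Proposition \ref{prop:splitRank}, for a large split the matrix $\operatorname{Flat}_{A|B}(P)$ has rank at most $m\kappa$ whenever $A|B$ is displayed on $T$, and generically has rank exceeding $m\kappa$ otherwise. As there are only finitely many splits of $X$, Proposition \ref{prop:genericProp} shows that, off a measure-zero subset of the PM parameter space, the set
\[
\mathcal S=\{\,A|B\ \text{large} : \rank\operatorname{Flat}_{A|B}(P)\le m\kappa\,\}
\]
coincides exactly with the collection of large splits displayed on $T$. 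Thus $\mathcal S$ is determined by $P$, and it suffices to recover from $\mathcal S$ a displayed tripartition $A|B|C$ with $|A|,|B|\ge 3$.

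The second step is the combinatorial heart: every binary tree on $n\ge 9$ leaves has an internal vertex $v$ whose induced tripartition $A|B|C$ has $|A|,|B|\ge 3$. First I would establish the sublemma that for $n\ge 7$ there is an internal edge carrying at least $3$ leaves on each side. If not, then for every internal edge the smaller side has exactly $2$ leaves; such a side is forced to be a cherry attached at its parent, and since that cherry determines the edge, the assignment of internal edges to cherries is injective. A binary tree on $n$ leaves has at most $\lfloor n/2\rfloor$ (leaf-disjoint) cherries and $n-3$ internal edges, so $n-3\le\lfloor n/2\rfloor$, whence $n\le 6$ --- a contradiction. Now fix an internal edge $e=uv$ with at least $3$ leaves on each side, and let $L_u,L_v$ be the leaf sets of the two components of $T-e$. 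If the tripartition at $u$ has at most one part of size $\ge 3$, then, because the part $L_v$ already has size $\ge 3$, the two parts refining $L_u$ each have size $\le 2$, so $|L_u|\le 4$; arguing likewise at $v$ gives $|L_v|\le 4$, hence $n\le 8$. So when $n\ge 9$ at least one endpoint of $e$ is the required vertex. (The balanced tree on $8$ leaves has no such vertex, so the bound $n\ge 9$ is sharp.)

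For the third step I would recover such a tripartition from $\mathcal S$. If $v$ has tripartition $A|B|C$ with $|A|,|B|\ge 3$, then $A|(B\cup C)$ and $B|(A\cup C)$ both lie in $\mathcal S$, they are realized by the two edges at $v$ pointing toward $A$ and toward $B$, and these are the only edges of $T$ that separate $A$ from $B$; therefore $A\cap B=\emptyset$ and no split $D|D^c\in\mathcal S$ has $A\subsetneq D$ and $B\subsetneq D^c$. Call an ordered pair of distinct oriented splits $(A_1|A_1^c,\,A_2|A_2^c)$ in $\mathcal S$ \emph{adjacent} if it has exactly these two properties. Conversely, if such a pair is adjacent, the edges $e_1,e_2$ realizing the two splits must meet at a common vertex: otherwise an interior edge of the path joining $e_1$ to $e_2$ carries a split in $\mathcal S$ strictly between them, contradicting adjacency --- and then $A_1\,|\,A_2\,|\,(X\setminus(A_1\cup A_2))$ is precisely the tripartition at that vertex, with its first two parts of size $\ge 3$, each being a side of a large split. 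Hence the recipe ``search $\mathcal S$ for an adjacent pair, say the lexicographically first, and output the associated tripartition'' is well defined, since an adjacent pair exists by the second step; it depends only on $\mathcal S$, hence only on $P$; and it always returns a displayed tripartition of the desired form.

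The principal obstacle is the combinatorial lemma of the second step, and specifically pinning down that $n=9$ is exactly the point at which the argument goes through: one must first secure an internal edge with three leaves on each side and then play the two endpoints of that edge off against one another. Once this lemma is available, the first step is immediate from Propositions \ref{prop:splitRank} and \ref{prop:genericProp}, and the split manipulations of the third step are routine.
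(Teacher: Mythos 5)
Your argument is correct, and while it rests on the same foundation as the paper's proof --- using Proposition \ref{prop:splitRank} (with Proposition \ref{prop:genericProp} and the finiteness of the set of splits) to read off from $P$ exactly which large splits are displayed --- the combinatorial half is genuinely different. The paper invokes Lemma 4.8 of \citet{RS2012}, which supplies an internal vertex whose tripartition has two blocks of size at least $\lceil n/4\rceil\ge 3$ when $n\ge 9$, and then certifies the tripartition by an ad hoc case analysis on the size of the third block: when $|C|=1$ the two displayed splits $A\cup\{c\}|B$ and $A|B\cup\{c\}$ pin it down, and when $|C|=2$ one additionally checks that $A\cup\{c_1\}|B\cup\{c_2\}$ and $A\cup\{c_2\}|B\cup\{c_1\}$ are \emph{not} displayed. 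You instead prove the existence statement from scratch (the cherry-counting sublemma giving an internal edge with three leaves on each side for $n\ge7$, followed by playing the two endpoints of that edge against each other to force $n\le 8$ in the bad case), and you extract the tripartition by a uniform criterion --- adjacency of two disjointly oriented splits in $\mathcal S$ with no split of $\mathcal S$ strictly between them --- rather than by inspecting the small third block. Your route is more self-contained (no appeal to the Rhodes--Sullivant lemma) and handles all sizes of $C$ at once; it also makes transparent why $n=9$ is the threshold, matching the paper's observation that the balanced $8$-taxon shape is the one left uncovered. The paper's version is shorter by citation and its $|C|\le 2$ cases foreshadow how non-displayed splits are detected. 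One small point to tighten in your third step: as stated, the pair consisting of the two orientations $A|A^c$ and $A^c|A$ of a single large displayed split vacuously satisfies both adjacency conditions and would output a ``tripartition'' with empty third block, so you should require the two members of an adjacent pair to be distinct as unoriented splits (equivalently, require $A_1\cup A_2\ne X$); with that exclusion, your verification that an adjacent pair forces the realizing edges to share a vertex, and that the intermediate split on any longer path is itself large and displayed hence in $\mathcal S$, goes through exactly as you wrote it.
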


\begin{proof}
	By Lemma $4.8$ of \citet{RS2012}, every unrooted binary tree $T$ with $n\geq 3$ has an 
	internal vertex $v$ which induces a tripartition $A|B|C$ such that two of the three components 
	contain at least $\lceil n/4 \rceil$ leaves of $T$. 
	
	The two edges incident to $v$ that correspond to subsets of $X$ with at least $\lceil n/4 \rceil$ 
	leaves are generically identifiable by Proposition \ref{prop:splitRank}, since for $n\geq 9$, 
	$\lceil n/4 \rceil\geq 3$. If the third edge incident to $v$ has $3$ or more taxa in its component, 
	it also can be identified. Thus, it remains to establish that the third edge incident to $v$ can be 
	identified when the number of taxa in its component is $1$ or $2$. Examples of such trees are illustrated for 
	$n=9$ in Figure \ref{Fig5.7}.
	
	\begin{figure}[h]
		\centering
			$(a)$ \includegraphics[width=2.5in]{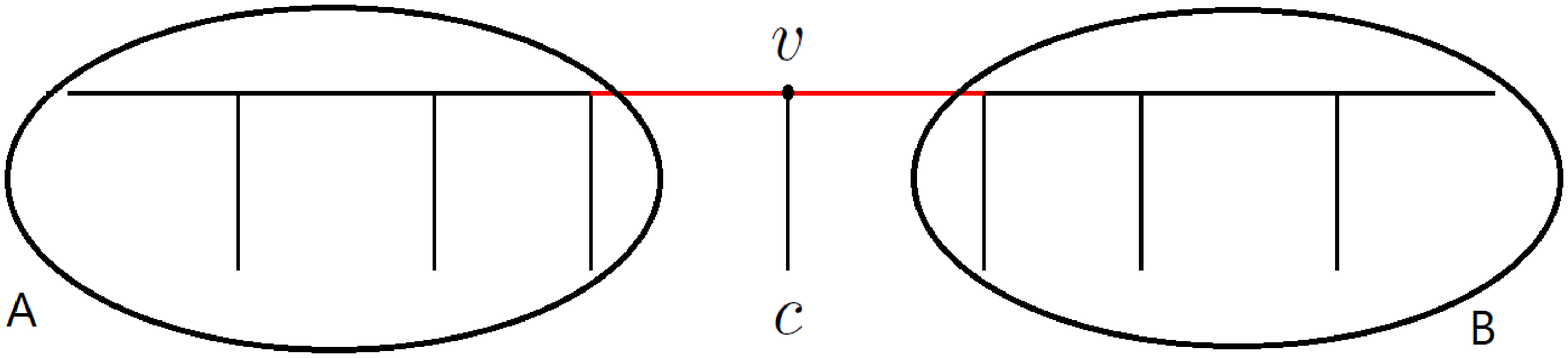}
		\hspace*{0.1in}
		$(b)$ \includegraphics[width=2in]{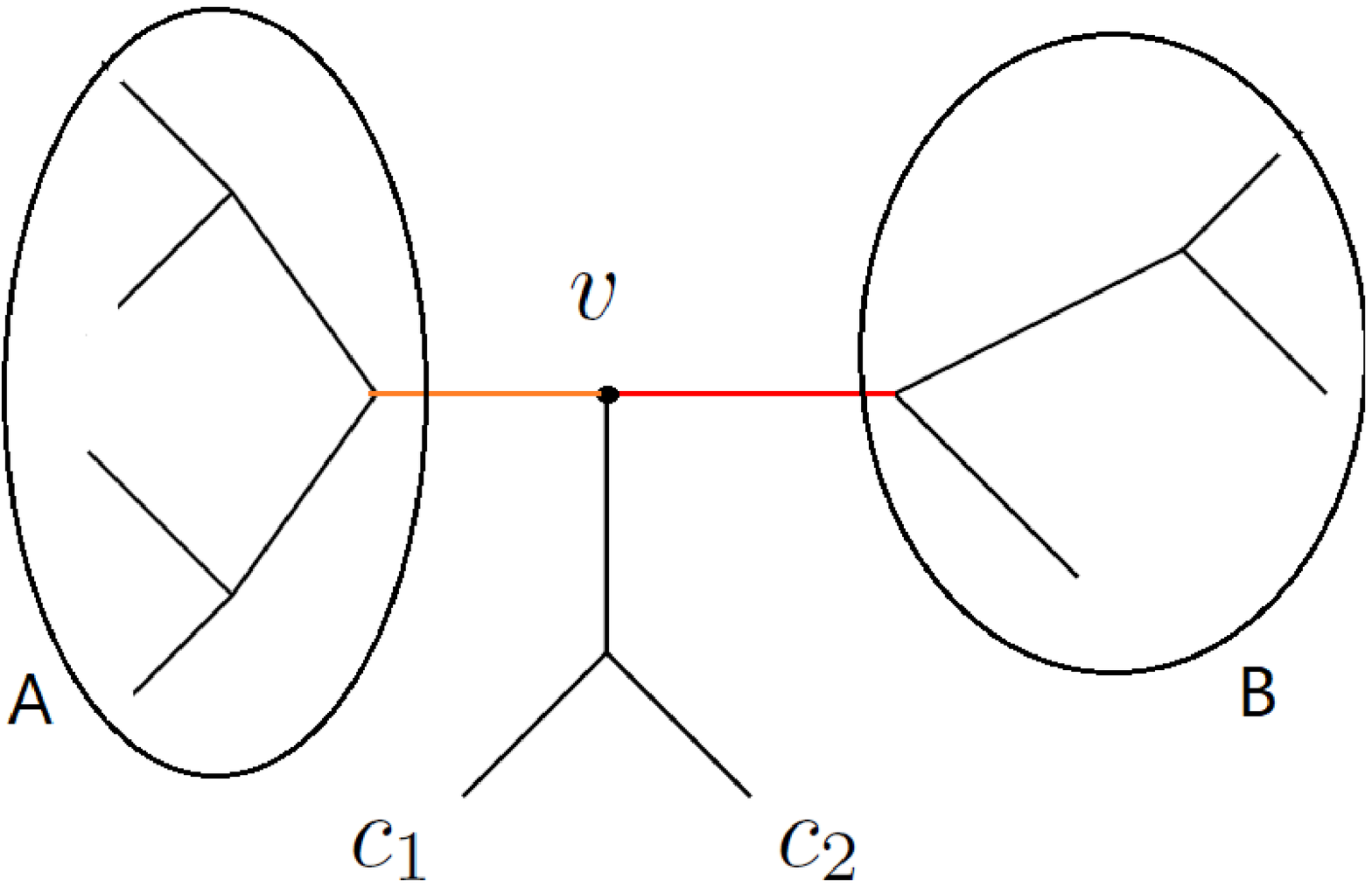}
		\caption{Examples of 9-taxon trees with internal vertex $v$ inducing $A|B|C$ with $|A|,|B|\geq 3$ and $|C|=1$ or $2$.} \label{Fig5.7}
	\end{figure}
	
	If the third component has only one leaf, as in Figure \ref{Fig5.7}(a), the two bipartitions 
	$A\cup\{c\}|B$ and $A|B\cup\{c\}$ are identifiable by Proposition \ref{prop:splitRank}. Together this implies that  
	the tripartition induced by $v$ is $A|B|\{c\}$. If the third component has two leaves as in  
	Figure \ref{Fig5.7} (b), the two splits $A\cup\{c_1, c_2\}|B$ and $A|B\cup\{c_1, c_2\}$ are identifiable, 
	but $A\cup\{c_1\}|B\cup\{c_2\}$ and $A\cup\{c_2\}|B\cup\{c_1\}$ are not displayed on $T$, and that can be 
	detected by Proposition \ref{prop:splitRank}. This implies the tripartition $A|B|\{c_1, c_2\}$ is on the tree.
\end{proof}	

With a tripartition on the tree identifiable by the preceding proposition,
we prepare to apply Kruskal's Theorem.
Letting $P$ be a joint distribution from $PM(T,\kappa,m)$, pick an internal
vertex $v$ of $T$ inducing such a tripartition $A|B|C$.
Then  by Lemma \ref{lem:vertexParams} 
$$
\operatorname{Flat}_{A|B|C}(P)=[\boldsymbol{\pi}; {M}_A, {M}_B, {M}_C]=[\overline{M}_A, {M}_B, {M}_C],
$$
where $\overline{M}_A=\diag({\boldsymbol \Pi})M_A$. 
Provided the Kruskal ranks of the matrices 
$\overline{M}_A, {M}_B, {M}_C$ are large enough, at least generically,
Kruskal's theorem can be applied.  The next three lemmas establish this.

\begin{lemma} \label{lem:rankStarMatA} \label{Lem30} 
    Consider the model $PM(T, 20, m)$ with
    $m \le 77$.
    If $\ell \ge 3$, then the $\ell^{th}$ row tensor power of the 
	$m\kappa \times \kappa$ Markov matrix associated to a terminal edge of $T$ 
	has full row rank for generic parameters.
\end{lemma}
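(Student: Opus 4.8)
The plan is to reduce this to Proposition \ref{prop:rankTerminal}, which already handles the structurally special matrices built from blocks $M(a_1,\dots,a_\kappa)$, and then to promote that special case to the full PM parameter space by the same genericity mechanism (Proposition \ref{prop:genericProp}) used throughout.

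First I would set up the analytic map whose genericity is at issue. Fix a terminal edge $e$ of $T$. By Definition \ref{def:PM_lambda_kappa} the associated $m\kappa\times\kappa$ Markov matrix $M^e$ is obtained by stacking the class matrices $M_i^e=\exp(r_iQ_it_e)$, where $Q_i$ is the GTR rate matrix built from $R$ and $\boldsymbol\pi_i$; hence the entries of $M^e$ are analytic in the numerical parameters $R$, $t_e$, $\boldsymbol\pi_i$, $r_i$. Complexifying these parameters, the assignment $\Phi$ sending them to the entry vector of $(M^e)^{\otimes_r^\ell}$ is a complex analytic map on an open set $U\subseteq\mathbb C^\ell$, since the entries of a row tensor power are polynomials in the entries of $M^e$. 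Let $V$ be the variety of $m\kappa\times\kappa^\ell$ matrices of rank at most $m\kappa-1$, cut out by the $m\kappa\times m\kappa$ minors, and let $f$ be one such minor, so $f\in I(V)$. Having full row rank is exactly landing off $V$, and this is dimensionally possible because for $\kappa=20$, $m\le 77$, and $\ell\ge 3$ we have $m\kappa\le 1540<8000\le\kappa^\ell$.

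By Proposition \ref{prop:genericProp} it then suffices to exhibit one parameter point $u_1\in U$ with $f(\Phi(u_1))\ne 0$, i.e. one PM parameter choice for which $(M^e)^{\otimes_r^\ell}$ has full row rank. I would take $R=\mathds 1$ and $t_e=1$ (the internal edge lengths and the weights being irrelevant to $M^e$), and choose the profiles $\boldsymbol\pi_i$ and rates $r_i$ so that, via the correspondence of Lemma \ref{lem:matForm}, the stacked matrix $M^e$ equals a matrix $M$ that stacks $m$ blocks of the form $M(a_1,\dots,a_\kappa)$ for which $M^{\otimes_r^\ell}$ has full row rank. Lemma \ref{lem:matForm} requires each block's entries to satisfy $a_j\ge 0$ and $0\le s<1$; since full row rank is generic in the $a_i$ by Proposition \ref{prop:rankTerminal}, it already holds on a full-dimensional subset of that admissible region, so suitable $\boldsymbol\pi_i$, $r_i$ exist (alternatively, one may allow complex $\boldsymbol\pi_i$, $r_i$ and hit the integer witness matrix used in the \pari verification directly). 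Proposition \ref{prop:rankTerminal}, whose proof produces precisely such a matrix and covers all $\ell\ge 3$ by the row-tensor factorization $M^{\otimes_r^\ell}=M^{\otimes_r^3}\otimes_r M^{\otimes_r^{\ell-3}}$, then gives $f(\Phi(u_1))\ne 0$, and Proposition \ref{prop:genericProp} finishes the argument.

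The genuine content here is already isolated in Proposition \ref{prop:rankTerminal}: the exact-integer \pari computation certifying full row rank $1540$ of a $1540\times 8000$ matrix of the prescribed form for $m=77$, plus the restacking/repetition remark deducing the cases $m<77$ and $\ell>3$. Consequently there is no real obstacle left in the present lemma; the only work beyond citing Proposition \ref{prop:rankTerminal} is the routine packaging above — identifying the analytic parameterization, noting that the $R=\mathds 1$, unit-length terminal edge choice is a legitimate (interior) point of the PM parameter space realizing that witness matrix, and invoking Proposition \ref{prop:genericProp}.
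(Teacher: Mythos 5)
Your proposal is correct and follows essentially the same route as the paper: reduce via Proposition \ref{prop:genericProp} to exhibiting a single witness, take $R=\mathds 1$ with terminal branch length $1$ so that Lemma \ref{lem:matForm} puts the stacked Markov matrix in the form covered by Proposition \ref{prop:rankTerminal}, whose \pari-certified generic full-row-rank result supplies the witness. Your extra care in checking that a witness can be found within the admissible region $a_j\ge 0$, $0\le s<1$ is a welcome elaboration of a point the paper leaves implicit, but it is not a different argument.
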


\begin{proof} Using Proposition \ref{prop:genericProp}, it is enough to show there is 
	a single choice of parameters for which the tensor power has full row rank. 
Let $R = \mathds 1$, and take the terminal branch lengths to be $1$. Then by Lemma \ref{lem:matForm}  the 
Markov matrix $M_e$ on a terminal edge has the form of stacked matrices of the form $M(a_1,\dots, a_\kappa)$. 
By the \pari calculation of Proposition \ref{prop:rankTerminal}, for generic choices of the other 
parameters, $M_e^{\otimes_r^\ell}$, $\ell\ge 3$, has full row rank.
\end{proof}

Using Proposition \ref{Prop3.19} in a similar argument we obtain the following.

\begin{lemma} \label{lem:rankStarMatB} 
	Consider the model $PM(T, \kappa, m)$ with $\kappa \ge 2$ and
	$m \ge 1$.
	Then for $\ell \ge 1$, the $\ell^{th}$ row tensor  power of the 
	$m\kappa \times \kappa$ Markov matrix associated to a terminal edge of $T$  
	generically has Kruskal rank at least 2.
\end{lemma}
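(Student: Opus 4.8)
The plan is to mirror the argument used for Lemma \ref{lem:rankStarMatA}, simply swapping the appeal to Proposition \ref{prop:rankTerminal} for one to Proposition \ref{prop:LowerBoundRankTerminal}. The first step is to recognize that ``Kruskal row rank at least $2$'' is a generic condition in the sense of Definition \ref{def:generic}: an $m\kappa\times\kappa^\ell$ matrix has Kruskal rank $\le 1$ exactly when some pair of its rows is linearly dependent, and the locus where a fixed pair of rows is dependent is the vanishing locus of the $2\times 2$ minors of the corresponding two-row submatrix. The set of Kruskal-rank-$\le 1$ matrices is therefore a finite union of such determinantal varieties, hence itself a variety $V$. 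Since the map sending $PM(T,\kappa,m)$ parameters first to the stacked $m\kappa\times\kappa$ terminal Markov matrix $M_e$ and then to its $\ell$-th row tensor power $M_e^{\otimes_r^\ell}$ is a composition of complex analytic operations, Proposition \ref{prop:genericProp} reduces the lemma to exhibiting a \emph{single} choice of parameters at which $M_e^{\otimes_r^\ell}$ lies off $V$, i.e.\ has Kruskal rank at least $2$.

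To produce such a point I would specialize exactly as in Lemma \ref{lem:matForm}: take $R=\mathds 1$ and all terminal branch lengths equal to $1$, so that the terminal Markov matrix $M_e$ is precisely a stack of $m$ matrices of the form $M(a_1,\dots,a_\kappa)$, one per class. Proposition \ref{prop:LowerBoundRankTerminal} (that is, Proposition \ref{Prop3.19}) then asserts that for generic choices of the entries $a_i$ the power $M_e^{\otimes_r^\ell}$ has Kruskal row rank at least $2$, for every $\ell\ge 1$ and every $\kappa\ge 2$, $m\ge 1$; in particular at least one admissible choice exists. Pulling this choice back through the correspondence of Lemma \ref{lem:matForm}---any tuple $(a_i)$ with $a_i\ge 0$ and $\sum a_i<1$ comes from genuine profile and rate parameters---supplies the required point $p_1$, and Proposition \ref{prop:genericProp} then completes the proof.

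I do not expect a genuine obstacle here. The substantive work---the explicit small-positive-entry construction showing that no two rows within a block $M(a_1,\dots,a_\kappa)$, and no two rows across distinct blocks, are proportional, together with the passage from $\ell=1$ to $\ell>1$ via a single nonzero column of $M_e^{\otimes_r^{\ell-1}}$---has already been carried out inside the proof of Proposition \ref{prop:LowerBoundRankTerminal}, which I am free to quote. The only point warranting any care, and the one I would flag explicitly, is confirming that the Kruskal-rank-$\le 1$ locus really is a variety (so that $p_1\notin V$ yields some $f\in I(V)$ with $f(p_1)\ne 0$ and Proposition \ref{prop:genericProp} applies verbatim); once that bookkeeping is in place the lemma is immediate. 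It is worth contrasting this with Lemma \ref{lem:rankStarMatA}, where the stronger conclusion of full row rank forced $\ell\ge 3$ and required a \pari computation---here the modest target of Kruskal rank $2$ is visible by hand, so no machine computation is needed.
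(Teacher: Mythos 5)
Your proposal is correct and matches the paper's own (very brief) argument: the paper simply says the result follows from Proposition \ref{Prop3.19} "in a similar argument" to Lemma \ref{lem:rankStarMatA}, i.e.\ specialize to $R=\mathds 1$ and unit terminal branch lengths via Lemma \ref{lem:matForm} and then invoke Proposition \ref{prop:genericProp}. Your extra remark that the Kruskal-rank-$\le 1$ locus is a variety is a harmless (and correct) elaboration of the same route.
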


\begin{lemma} \label{prop:ranksBigEnough} 
	For a distribution from the model $PM(T,\kappa,m)$ with $\kappa=20$ and $m\leq 77$, let 
	$\overline{M}_A, {M}_B, {M}_C$ be the matrices described above. 
	If $|A|,|B|\geq 3$, and $|C|\geq 1$, then generically $\overline{M}_A$, ${M}_B$ have full 
	Kruskal rank and ${M}_C$ has Kruskal rank at least $2$.
\end{lemma}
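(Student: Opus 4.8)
The plan is to reduce both Kruskal-rank assertions to full-row-rank statements about row-tensor powers of a single terminal Markov matrix --- statements already settled in Lemmas~\ref{lem:rankStarMatA} and~\ref{lem:rankStarMatB} --- and then to invoke Proposition~\ref{prop:genericProp} to promote a single favorable parameter choice to a generic statement over the whole PM parameter space. First I would dispose of $\overline{M}_A$: since $\overline{M}_A=\diag(\boldsymbol{\Pi})M_A$ and the entries of $\boldsymbol{\Pi}=(w_1\boldsymbol{\pi}_1,\dots,w_m\boldsymbol{\pi}_m)$ are generically nonzero, $\diag(\boldsymbol{\Pi})$ is generically invertible, so $\overline{M}_A$ and $M_A$ share rank and Kruskal rank for generic parameters; and since any matrix of full row rank automatically has full Kruskal rank (every subset of its rows, lying in a linearly independent set, is independent), it suffices to prove that $M_A$ and $M_B$ generically have full row rank $m\kappa$ and that $M_C$ generically has Kruskal rank at least $2$. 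Note that full row rank is numerically attainable at all only because $|A|,|B|\ge 3$ forces $m\kappa\le 77\cdot 20< 20^{3}\le\kappa^{|A|}$, the number of columns of $M_A$.

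Next I would exhibit one convenient parameter point, exactly in the spirit of the proof of Proposition~\ref{prop:splitRank}: take $R=\mathds 1$, assign every internal edge of $T$ length $0$ and every terminal edge length $1$. Then all internal Markov matrices are the identity, so the Markov property along $T$ makes $M_A$ collapse to the row-tensor product of the terminal Markov matrices of the leaves in $A$; and because the profiles $\boldsymbol{\pi}_i$ and rates $r_i$ are class parameters shared by all edges, each such terminal matrix equals one and the same $m\kappa\times\kappa$ matrix $M_e$, which by Lemma~\ref{lem:matForm} is a stack of $m$ matrices of the form $M(a_1,\dots,a_\kappa)$. Thus at this point $M_A=M_e^{\otimes_r^{|A|}}$, $M_B=M_e^{\otimes_r^{|B|}}$, and $M_C=M_e^{\otimes_r^{|C|}}$. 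Since $|A|,|B|\ge 3$, Lemma~\ref{lem:rankStarMatA} --- which rests on the \pari computation of Proposition~\ref{prop:rankTerminal}, where the bound $m\le 77$ enters --- supplies profiles and rates making $M_A$ and $M_B$ have full row rank; and since $|C|\ge 1$, Lemma~\ref{lem:rankStarMatB} supplies Kruskal rank at least $2$ for $M_C$. As these are finitely many generic conditions within the $R=\mathds 1$ family, they can be met simultaneously, so there is a single PM parameter choice at which all three conditions hold.

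Finally I would run the standard genericity argument. The entries of $\overline{M}_A,M_B,M_C$ are analytic functions of the PM parameters (matrix exponentials in the edge lengths and rates, polynomials in the resulting matrix entries and in $\boldsymbol{\Pi}$), and the parameterization extends to a complex analytic map on an open set, as in the earlier proofs. ``Full row rank of $M_A$'' is the nonvanishing of some maximal minor, a polynomial in $I(V)$ for $V$ the variety of rank-deficient matrices; ``Kruskal rank of $M_C$ at least $2$'' is the nonvanishing of some polynomial cutting out the variety of matrices having a zero row or two proportional rows (a finite union of varieties, exactly as in the proofs of Propositions~\ref{prop:rankTerminal} and~\ref{Prop3.19}). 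Applying Proposition~\ref{prop:genericProp} to each of these polynomials at the parameter point found above shows that each condition fails only on a measure-zero subset of parameter space; the union of the three exceptional sets is again of measure zero, which yields the lemma.

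I expect the only genuine subtlety to be the degeneration step --- recognizing that sending the internal edge lengths to $0$ (with $R=\mathds 1$ and unit terminal lengths) turns the a priori complicated star-tree matrices $M_A,M_B,M_C$ into clean row-tensor powers of the single terminal matrix $M_e$, so that Lemmas~\ref{lem:rankStarMatA} and~\ref{lem:rankStarMatB} apply verbatim --- together with the bookkeeping that $|A|,|B|\ge 3$ is precisely what makes full row rank both numerically possible ($m\kappa\le\kappa^{|A|}$ for $m\le 77$) and within the hypothesis $\ell\ge 3$ of Lemma~\ref{lem:rankStarMatA}. The hard quantitative content --- the rank bounds themselves --- is already packaged in those lemmas and the exact \pari computations behind them, so nothing further needs to be ground out here.
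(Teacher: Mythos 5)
Your proposal is correct and follows essentially the same route as the paper: degenerate to the star tree by setting internal branch lengths to $0$ and terminal branch lengths to $1$, recognize $M_A$, $M_B$, $M_C$ as row-tensor powers of the common terminal matrix so that Lemmas~\ref{lem:rankStarMatA} and~\ref{lem:rankStarMatB} apply, note $\overline{M}_A=\diag(\boldsymbol\Pi)M_A$ inherits the rank, and promote the single favorable point to a generic statement via Proposition~\ref{prop:genericProp}. Your added remarks (invertibility of $\diag(\boldsymbol\Pi)$, the count $m\kappa\le\kappa^{|A|}$, and fixing $R=\mathds 1$ explicitly) only make explicit what the paper leaves implicit.
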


\begin{proof}
	Using Proposition \ref{prop:genericProp}, we need only show there is a single choice of parameters for 
	which these rank claims hold.  Set
	all internal branch lengths 0 and all terminal branch
	lengths $1$, so that $T$ is a star tree rooted at the central node $v$. 
Then by Lemma \ref{lem:rankStarMatA}, since $|A|,|B|\geq 3$ for generic choices of the profiles $\boldsymbol{\pi}_i$ the matrices ${M}_A$ (and therefore $\overline{M}_A$) and ${M}_B$ have full row rank and 
therefore full Kruskal rank. Also by Lemma \ref{lem:rankStarMatB}, ${M}_C$ has Kruskal rank at least $2$.
\end{proof}

We add the last ingredient before the main result.
\begin{prop} \label{prop:subMainThm}  
	Suppose $T$ is a tree on $X$ which displays a known tripartition $A|B|C$ 
	corresponding to vertex $r$ with $|A|,|B|\geq3$, $|C|\geq 1$. 
	If $\kappa=20$ and $m\leq 77$ then both $T$ and the numerical parameters of the PM$(T, \kappa, m)$ 
	model are generically identifiable, up to arbitrary rescaling of the tree and the exchangeability matrix $R$.
\end{prop}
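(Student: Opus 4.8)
The plan is to apply Kruskal's Theorem (Theorem~\ref{thm:Kruskal}) at the known tripartition to split $T$ into three legs, recover the three ``leg matrices'' essentially uniquely, and then reconstruct each leg. Root $T$ at $v=r$; by Lemma~\ref{lem:vertexParams},
$$\operatorname{Flat}_{A|B|C}(P)=[\overline M_A,M_B,M_C],$$
where $\overline M_A=\diag(\boldsymbol\Pi)M_A$ and $M_A,M_B,M_C$ are the Markov matrices from $v$ into the subtrees $T_A,T_B,T_C$; each has $m\kappa$ rows, indexed by the hidden state $(i,j)$ (class $i$, residue $j$) at $v$. By Lemma~\ref{prop:ranksBigEnough}, for generic parameters $\overline M_A$ and $M_B$ have Kruskal rank $m\kappa$ and $M_C$ has Kruskal rank at least $2$, so the Kruskal inequality $m\kappa+m\kappa+2\ge 2(m\kappa)+2$ holds, and Theorem~\ref{thm:Kruskal} recovers $\overline M_A,M_B,M_C$ from $P$ up to a common permutation $\sigma$ of their $m\kappa$ rows and a compatible row rescaling.

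Next I would fix the rescaling. Since $M_B$ and $M_C$ are genuinely row-stochastic, normalizing the Kruskal output so that its $B$- and $C$-factors have unit row sums forces both diagonal scalings to be the identity; the row sums of the resulting $\overline M_A$ then yield $\boldsymbol\Pi$, and $M_A=\diag(\boldsymbol\Pi)^{-1}\overline M_A$. So from $P$ we recover, up to a single common permutation $\sigma$ of the $m\kappa$ hidden states at $v$, the root distribution $\boldsymbol\Pi$ and the Markov matrices $M_A,M_B,M_C$.

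It then remains to extract the class structure, the subtree topologies, and the within-class numerical parameters. Because the PM model never mixes classes, each leg matrix $M_A$ (similarly $M_B,M_C$) is a vertical stack of $m$ single-class GTR Markov matrices $M_{A,i}$ on $T_A$ all sharing the exchangeability matrix $R$, and $\boldsymbol\Pi$ partitions into blocks $w_i\boldsymbol\pi_i$. I would argue that, generically, the only way to group the $m\kappa$ recovered rows into $m$ blocks of size $\kappa$ so that each block simultaneously arises as a single-class reversible Markov matrix on one common subtree topology for each of the three legs, consistent with the recovered $\boldsymbol\Pi$, is the true one; the shared-$R$ constraint together with generic algebraic independence of the class parameters forces this uniqueness, to be verified via Proposition~\ref{prop:genericProp} (in the spirit of Propositions~\ref{prop:rankWrongSplit} and~\ref{prop:rankJD}). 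Once the blocks are known, the common $\sigma$ matches class labels consistently across $A,B,C$; block sums and normalized blocks of $\boldsymbol\Pi$ give the weights $w_i$ and profiles $\boldsymbol\pi_i$; and for each class and each leg the single-class joint distribution at $\{v\}\cup A$ — now effectively observable through the reconstructed $M_{A,i}$ — determines that subtree's topology and metric GTR parameters, up to a normalization of $R$, by Theorem~\ref{thm:GTRid}. Regluing $T_A,T_B,T_C$ at $v$ recovers $T$, and all per-class branch lengths and rates $r_i$ are then pinned down; the only residual freedom is the simultaneous rescaling $R\mapsto cR$ with compensating rescaling of branch lengths and rates, which is exactly the normalization ambiguity already present in Theorem~\ref{thm:GTRid}.

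The main obstacle is this last step. Kruskal returns the leg matrices only up to an arbitrary permutation of the $m\kappa$ hidden states, so the essential work is to show that generically the hidden block ($=$ class) structure is uniquely recoverable — that no spurious regrouping of the recovered rows is simultaneously consistent with a single-class reversible (shared-$R$) model on a common subtree topology for all three legs. This is precisely where the PM assumption of a single shared exchangeability matrix does the work, and where one must again pass to genericity, very likely with an additional explicit rank computation.
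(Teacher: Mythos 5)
The first half of your argument matches the paper: Kruskal's theorem applied to $\operatorname{Flat}_{A|B|C}(P)$ via Lemma~\ref{lem:vertexParams} and Lemma~\ref{prop:ranksBigEnough}, followed by normalizing row sums to recover $\boldsymbol\Pi$ and $M_A,M_B,M_C$ up to one common permutation of the $m\kappa$ hidden states at $r$. But the step you yourself flag as ``the main obstacle'' --- showing that the $m\kappa$ recovered rows can be grouped into their $m$ classes in only one way --- is exactly the substantive content of this proposition, and you leave it as an unproven claim. Your sketch (``the only way to group the rows into blocks consistent with a shared-$R$ reversible model on a common subtree topology is the true one, to be verified via Proposition~\ref{prop:genericProp}, very likely with an additional rank computation'') does not identify the polynomial condition, the variety, or the witness point that Proposition~\ref{prop:genericProp} would require, and it is not clear that ``no spurious regrouping is consistent with some shared-$R$ model'' is even an algebraically closed condition of the type that proposition handles; in particular the set of valid regroupings is a discrete search over partitions, each of which would need its own non-degeneracy argument. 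So as written there is a genuine gap at the decisive step.

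The paper closes this gap by a different and more concrete mechanism, which does not invoke the shared exchangeability matrix at all. After marginalizing so that $|A|=3$, say $A=\{x,y,z\}$ with cherry $\{y,z\}$ below a node $v_1$, each row of $M_A$, indexed by a hidden pair $(u,w)$, is reshaped into a $\kappa\times\kappa\times\kappa$ tensor which (using time-reversibility to re-root at $v_1$) factors as $[\boldsymbol\pi_{v_1};\,M_y,M_z,\widehat M_{(u,w)}]$, where $\widehat M_{(u,w)}=M_2\diag(M_1(\cdot,w))M_x$ and, crucially, $M_y$ and $M_z$ depend only on the class $u$, not on $w$. A second application of Kruskal's theorem to each such row (the three $\kappa\times\kappa$ factors are generically of full rank $\kappa$) recovers $M_y,M_z$ per row up to row permutation, and rows sharing the same $M_y,M_z$ are exactly the rows of a single class. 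Once the rows of $M_A,M_B,M_C$ and the entries of $\boldsymbol\Pi$ are sorted by class, reassembling one class into a three-way product gives a single-class GTR distribution on all of $T$, and Theorem~\ref{thm:GTRid} identifies the tree and the numerical parameters up to the usual normalization; the general case $|A|,|B|\ge 3$ follows by marginalizing to $|A|=3$ and then exchanging the roles of $A$ and $B$. If you want to complete your proposal, this per-row tensor-decomposition argument (or some equally explicit substitute) is what has to replace your appeal to generic uniqueness of the block structure.
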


\begin{proof}
	Using the notation and result of Lemma \ref{prop:ranksBigEnough}, 
	if a distribution $P$ comes from generic parameters of $PM(T,\kappa,m)$, then
	$$
	\operatorname{Flat}_{A|B|C}(P)=[\overline{M}_A,{M}_B,{M}_C],
	$$
	where $\overline{M}_A,{M}_B$ have full Kruskal rank and $M_C$ has Kruskal rank 
	at least 2. Thus equation \eqref{eq:Kruskal} of Theorem \ref{thm:Kruskal} 
	is satisfied with $l=m\kappa$, and $\overline{M}_A, {M}_B, {M}_C$ are determined
	uniquely up to simultaneous permutation and scaling of the rows. 
	
	Also, by factoring out row sums from the matrices, we can generically identify the root distribution vector $\boldsymbol \Pi$ at the 
	node $r$ and $M_A, M_B, M_C$ up to simultaneous permutation of the entries of $\boldsymbol \Pi$ and the rows of the matrices. 
	Considering any entry of $\boldsymbol \Pi$, 
	and supposing that this corresponds to an unknown class $u \in [m]$ and state $w \in [\kappa]$, then the same rows of ${M}_A,{M}_B,{M}_C$ correspond 
	to the same class $u$ and state $w$. Since Kruskal's theorem yields identifiability only up to permutation, we 
	must determine which of the $m\kappa$ rows of ${M}_A,{M}_B,{M}_C$ correspond to the same fixed class $u$. 
	
	Consider first the special case that $|A|=3$ where $A=\{a,b,c\}$. Then $T$, which is generically binary, has a subtrees rooted at $r$, with 
	leaves $A=\{x,y,z\}$ as shown in Figure \ref{fig:tree3tax}, though we do not know which two taxa from $a,b,c$ form the cherry $\{y,z\}$.
	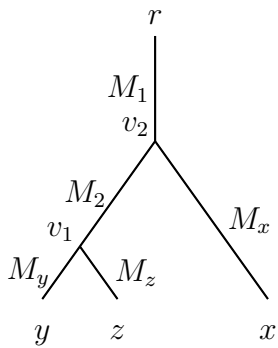
\begin{figure}[h]
		\begin{center}
			\begin{tikzpicture}
			\def\sc{1}

			\def\vdist{\sc*.7}; 
			\def\ldist{\sc*0.5};

			\foreach \i in {1,...,6} {
				\coordinate (v\i) at (\sc*\i,0);
			}
			
			\foreach[count = \i] \a in { $y$ , $z$ , $ $ ,$x$ } {
				\node[below = 0.2 cm of v\i] {\a};
			}
			
			\coordinate[above right=\vdist and \ldist of v1] (a);
			\coordinate[above right=2*\vdist and 2*\ldist of a] (b);
			\coordinate[above =2*\vdist and 2*\ldist of b] (root);

			\foreach \i/\j in {a/v1, a/v2, b/v4,  a/b, b/root} {
				\draw[thick] (\i) -- (\j);
			}

			\node[above left=-0.1 of b] {$v_{2}$};
			\node[above left=-0.1 of a] {$v_{1}$};
			\node[above=0.01 of root] {$r$};

			\path 
			(a) edge[ left=10]  node[xshift=0
			]  {$M_y$} (v1)
			(a) edge[ right=10]  node[xshift=2]  {$M_z$} (v2)
			(b) edge[ right=10]  node[xshift=2]  {$M_x$} (v4)
			(a) edge[ left=10]  node[xshift=0]  {$M_2$} (b)
			(b) edge[ left=10]  node[xshift=2]  {$M_1$} (root)
			;
			\end{tikzpicture}
		\end{center}	
		\caption{A subtree of $T$ with leaves $A=\{a,b,c\}=\{x,y,z\}$.}\label{fig:tree3tax}
	\end{figure}
	
	The Markov matrix $M_A$ is of size $m\kappa\times\kappa^3$. Choose the $\ell^{th}$ row of 
	$M_A$ where $\ell=(u,w)$ for unknown $u, w$. It is a row vector with $\kappa^3$ entries, but 
	we can reconfigure it as a 3-dimensional tensor of size
	$\kappa\times\kappa\times\kappa$ so its $(i, j, k)$-entry
	is $P(a=i, \, b=j, \, c=k \mid r=\ell)$. 
	Since the PM model is time reversible, take $v_1$ as the root of the subtree in 
	Figure \ref{fig:tree3tax}. Then for unknown $1\times\kappa$ vector ${\boldsymbol \pi}_{v_1}$, and $\kappa\times\kappa$ 
	Markov matrices $M_x, M_y, M_z, M_1, M_2$ for class $u$ on this subtree, the joint distribution of states at 
	$x, y, z, r$  for fixed class $u$ is
	\begin{align*}
	P(x=i, \, y=j, \, &z=k, \, r=(u, w))\\
	&=\sum_{\alpha=1}^\kappa \sum_{\beta=1}^\kappa 
	{\boldsymbol \pi}_{v_1}(\beta) M_y(\beta,j)M_z(\beta, k)M_2(\beta,\alpha)M_1(\alpha,w)M_x(\alpha,i)\\
    	&=\sum_{\beta=1}^\kappa {\boldsymbol \pi}_{v_1}(\beta) M_y(\beta,j)M_z(\beta, k) 
	\left(\sum_{\alpha=1}^\kappa M_2(\beta,\alpha)M_1(\alpha,w)M_x(\alpha,i) \right)\\
    	&=\sum_{\beta=1}^\kappa {\boldsymbol \pi}_{v_1}(\beta) M_y(\beta,j)M_z(\beta, k) \widehat{M}_{(u,w)}(\beta,i)=[ {\boldsymbol \pi}_{v_1}; M_y, M_z, \widehat{M}_{(u,w)}],
	\end{align*}
	where $\widehat{M}_{(u,w)}=M_2 \diag(M_1(\cdot,w)) M_x$ with $M_1(\cdot,w)$ denoting the
	$w^{th}$ column of $M_1$. 
	For fixed $u$ this is simply a 
	rescaling of the conditional distribution $P(x=i, \, y=j, \, z=k \mid r=(u,w))$ given in the $\ell^{th}$ row of $M_A$.
	
	Thus applying Kruskal's theorem to each row of $M_A$ reshaped into such a $3$-way tensor, we can decompose 
	$P(x=i, y=j, z=k \mid  r=\ell)$ for each $\ell=(u,w)$ into a triple product, as the matrices 
	generically all have rank $\kappa$. Note that for each $\ell
	= (u, w)$, Kruskal's theorem gives the matrices $M_y, M_z,\widehat{M}_{(u,w)}$ up to 
	ordering of their $\kappa$ rows. Two of these matrices, $M_y, M_z$, will be dependent only on the class $u$, 
	but not the state $w$. So considering all $\ell=(u,w)$, we can find $\kappa$ rows of $M_A$ with 
	the same (possibly permuted rows) version of $M_y$ and $M_z$ which correspond to a single class $u$. In this 
	way we can group the rows of $M_A, M_B, M_C$ with entries of $\boldsymbol \Pi$ by class $u$. Now taking those 
	rows of ${M}_A,{M}_B,{M}_C$, and entries of $\boldsymbol \Pi$ for one class $u$ and reassembling them in a 
	$3$-way product gives a tensor for a single class GTR model on the full tree $T$. Both the tree $T$
	and numerical parameters are identifiable for this single-class model by Theorem \ref{thm:GTRid}.
	
	For the general case, suppose $|A|, |B|\geq 3$. Then by marginalization down to $|A|=3$
	we can identify the subtrees and parameters for $B, C$. Then interchanging the roles of $A$ and $B$ 
	identifies the subtree and parameters for $A$.
\end{proof}

Combining Proposition \ref{prop:IDtripartion} with Proposition \ref{prop:subMainThm}, we have 
proved the main result.

\begin{theorem}\label{thm:mainThm}
	Let $T$ be a tree with at least $9$ taxa. Then under the PM $(T, 20, m)$ model with $m<74$, 
	both $T$ and numerical parameters are generically identifiable, up to arbitrary rescaling of the tree and the exchangeability matrix $R$.
\end{theorem}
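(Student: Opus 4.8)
The plan is to assemble the theorem from the two structural propositions already established, so the proof itself is short; the substance lives in those propositions and in the rank estimates feeding them. The two-stage strategy is: (i) from a generic distribution $P$, first pin down \emph{some} displayed tripartition of $T$ into blocks that are not too small; (ii) then bootstrap from that single tripartition to the entire tree topology and all numerical parameters.

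For step (i), I would invoke Proposition \ref{prop:IDtripartion}. Since $n \ge 9$, the combinatorial lemma of \citet{RS2012} supplies an internal vertex $v$ whose induced tripartition $A|B|C$ has two blocks of size at least $\lceil n/4\rceil \ge 3$, and the two corresponding splits are then detectable from ranks of matrix flattenings $\operatorname{Flat}(P)$ by Proposition \ref{prop:splitRank}. When the third block has one or two taxa, comparing ranks of the flattenings for $A\cup\{c\}|B$ and $A|B\cup\{c\}$ (and, in the two-taxon case, the two crossing splits that $T$ does not display) isolates the tripartition exactly. So generically $P$ determines a tripartition $A|B|C$ with $|A|,|B|\ge 3$ and $|C|\ge 1$.

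For step (ii), I would apply Proposition \ref{prop:subMainThm} to this known tripartition. Flatten $P$ along $A|B|C$ to get $[\overline M_A, M_B, M_C]$; Lemma \ref{prop:ranksBigEnough} (resting on the \pari computation of Proposition \ref{prop:rankTerminal}) gives that $\overline M_A$ and $M_B$ generically have full Kruskal rank $m\kappa$ while $M_C$ has Kruskal rank at least $2$, so Kruskal's Theorem \ref{thm:Kruskal} applies with $l=m\kappa$ and returns these matrices up to a common row permutation and row scaling; factoring out row sums then recovers $\boldsymbol\Pi$ and the three Markov matrices up to that permutation. The remaining task is to resolve the permutation, i.e.\ to decide which of the $m\kappa$ rows belong to a common class $u$. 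I would do this by marginalizing $A$ down to three taxa, reshaping each row of $M_A$ into a $\kappa\times\kappa\times\kappa$ tensor, and re-applying Kruskal's theorem: the two cherry-edge Markov matrices depend only on the class and not on the state, so rows that share them lie in one class. Grouping rows by class and reassembling produces, for each class, a single-class GTR tensor on all of $T$, to which Theorem \ref{thm:GTRid} applies, yielding $T$ and every numerical parameter up to the one unavoidable rescaling of edge lengths against $R$.

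The main obstacle is establishing part (2) of Proposition \ref{prop:splitRank} — that flattenings along \emph{non}-displayed splits generically exceed rank $m\kappa$ — since that is exactly where the bound $m<74$ is forced. The worst case is the crossing configuration of Figure \ref{fig:treeAB}(b), whose flattening has the form $N_A^T\widehat Q N_B$ with $N_A\ne N_B$, and only an exact \pari computation (Proposition \ref{prop:rankJD}, where $1480 = 20\times 74$ appears) certifies that this rank stays above $m\kappa$ for all $m<74$. A secondary difficulty, peculiar to the PM structure in which profiles vary across classes but $R$ does not, is the class-disambiguation step: Kruskal's theorem respects only the bulk $m\kappa$-fold labeling, so one must exhibit an intrinsic invariant that is constant within a class — here, the shared cherry matrices of a three-taxon marginal — and it is precisely this extra structure that blocks a direct appeal to the general phylogenetic-mixture result of \citet{RS2012}.
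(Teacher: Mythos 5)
Your proposal is correct and follows essentially the same route as the paper, whose proof of Theorem \ref{thm:mainThm} is exactly the combination of Proposition \ref{prop:IDtripartion} (identifying a displayed tripartition with two blocks of size at least $3$ via flattening ranks) with Proposition \ref{prop:subMainThm} (Kruskal's theorem, class disambiguation via the three-taxon marginal, then Theorem \ref{thm:GTRid}). Your summaries of the supporting rank computations, including the role of Proposition \ref{prop:rankJD} in forcing $m<74$, accurately reflect where the substance of the argument lies.
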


Theorem \ref{thm:mainThm} extends to certain tree shapes with fewer than 9 taxa.  To apply 
Proposition \ref{prop:subMainThm}, $T$ must display a tripartition
with two of its subsets of size at least 3, so that $T$ must have at least 7 taxa.  Such a 
tripartition will be generically  identifiable by the argument given for Proposition \ref{prop:IDtripartion}.
\begin{cor} For the profile mixture model $PM(T,20,m)$ with $m < 74$, parameters are generically identifiable
	if $T$ has any of the $8$-taxon tree shapes (a)-(d) shown in Figure \ref{fig:8taxTrees}, or the $7$-taxon caterpillar shape.
\end{cor}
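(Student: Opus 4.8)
The plan is to imitate the argument given for Proposition~\ref{prop:IDtripartion}, but to exhibit by hand, for each of the listed tree shapes, a single internal vertex $v$ whose induced tripartition $A|B|C$ has $|A|,|B|\ge 3$ (so that $|C|\in\{1,2\}$ when the tree has $7$ or $8$ taxa). Once such a vertex is produced, Proposition~\ref{prop:subMainThm} does all the remaining work: it takes a \emph{known} tripartition with two large parts and returns generic identifiability of $T$ together with all numerical parameters (up to the stated rescaling), so the only real content of the corollary is (i) checking the listed shapes actually admit such a vertex, and (ii) confirming that the vertex, and hence the tripartition, can be generically identified from $P$ via the rank criterion of Proposition~\ref{prop:splitRank}.

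First I would recall that Proposition~\ref{prop:splitRank} lets us detect, from $\operatorname{Flat}_{\bullet|\bullet}(P)$, exactly those splits $A'|B'$ with $|A'|,|B'|\ge 3$ that are displayed on $T$. For a $7$- or $8$-taxon tree with a vertex $v$ inducing $A|B|C$, $|A|,|B|\ge 3$, $|C|\in\{1,2\}$, the two splits $A\cup C|B$ and $A|B\cup C$ are displayed and have both sides of size $\ge 3$, so both are detectable; when $|C|=1$ their intersection pins down the tripartition $A|B|\{c\}$, and when $|C|=2$ one additionally checks, exactly as in the proof of Proposition~\ref{prop:IDtripartion}, that the two ``crossing'' splits $A\cup\{c_1\}|B\cup\{c_2\}$ and $A\cup\{c_2\}|B\cup\{c_1\}$ are \emph{not} displayed, which again is detectable by Proposition~\ref{prop:splitRank}, yielding the tripartition $A|B|\{c_1,c_2\}$. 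So the tripartition is generically identifiable for any tree that has such a vertex.

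Next I would go shape by shape. For the $7$-taxon caterpillar, label the leaves along the path and take the central cherry-adjacent internal vertex: it splits the leaves as $\{1,2,3\}\,|\,\{5,6,7\}\,|\,\{4\}$, a tripartition with $|A|=|B|=3$, $|C|=1$. For each of the $8$-taxon shapes (a)--(d) of Figure~\ref{fig:8taxTrees}, I would similarly point to an internal vertex whose removal leaves two components of size $\ge 3$ and one of size $1$ or $2$ --- for the $8$-taxon caterpillar this is a vertex inducing $\{1,2,3\}|\{5,6,7,8\}|\{4\}$ or $\{1,2,3\}|\{4\}\,\cup\ldots$, and for the more balanced $8$-taxon shapes one takes the internal edge whose two pendant subtrees each carry $\ge 3$ leaves (e.g. a $(3,3,2)$ or $(3,4,1)$ vertex). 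In every case $|A|,|B|\ge 3$ and $|C|\le 2$, so the hypotheses of Proposition~\ref{prop:subMainThm} are met, and since $m<74\le 77$ that proposition gives generic identifiability of $T$ and all numerical parameters up to rescaling the tree and $R$.

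The main (and essentially only) obstacle is the casework of confirming that each pictured $8$-taxon shape really does contain a vertex with the required $(\ge 3,\ge 3,\le 2)$ split type, and that no shape among (a)--(d) is, say, the $8$-taxon ``snowflake'' whose every internal vertex induces a $(2,2,\ldots)$-type partition; once the figure's shapes are enumerated this is immediate, since an $8$-taxon binary tree fails the condition only if every interior vertex has all three subtree-leaf-counts $\le 2$, which forces a very particular balanced shape that is excluded by the hypothesis that (a)--(d) are the displayed shapes. No new computation beyond the \pari{} ranks already invoked in Propositions~\ref{prop:rankWrongSplit}, \ref{prop:rankJD}, and~\ref{prop:rankTerminal} is needed, since all of those were established for $m<74$ (or $m\le 77$), a range that already covers every $m$ allowed in the corollary.
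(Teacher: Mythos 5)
Your proposal matches the paper's own (brief) argument: the paper likewise observes that each of the shapes (a)--(d) and the $7$-taxon caterpillar displays a tripartition with two parts of size at least $3$, identifies it generically by the same split-detection argument used for Proposition~\ref{prop:IDtripartion} (via Proposition~\ref{prop:splitRank}, including ruling out the crossing splits when $|C|=2$), and then invokes Proposition~\ref{prop:subMainThm}. Your shape-by-shape check, and the remark that the excluded shape (e) is exactly the one with no such vertex, is consistent with the paper, so the proof is correct and essentially identical in approach.
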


\begin{figure}[h]
	$(a)$ \includegraphics[width=2in]{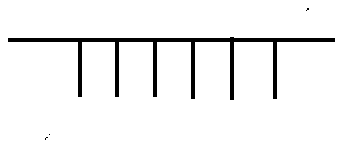}
	\hspace*{0.2in}
	$(b)$ \includegraphics[width=1.5in]{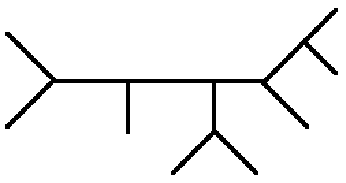}
	\hspace*{0.2in}
	$(c)$ \includegraphics[width=1.5in]{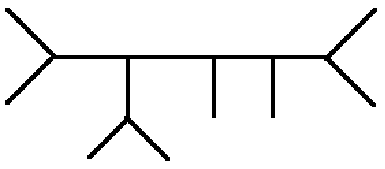}
	\hspace*{0.2in}
	$(d)$ \includegraphics[width=1.5in]{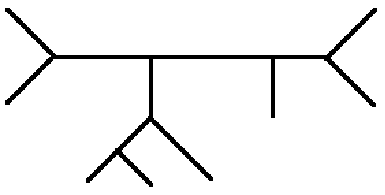}
	\hspace*{0.2in}
	$(e)$ \includegraphics[width=1.5in]{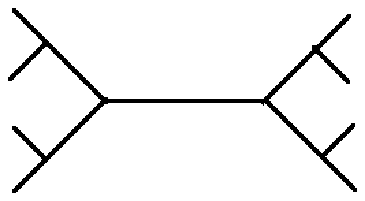}
	\caption{All binary unrooted tree shapes for 8 taxa. Parameters of the PM model are generically 
		identifiable for trees (a)-(d). The arguments of this paper do not answer the identifiability question for tree (e).}
\label{fig:8taxTrees}
\end{figure}

\section*{Acknowledgments}  This research was supported, in part, by the National Institutes of Health Grant 
R01 GM117590, awarded under the Joint DMS/NIGMS Initiative to Support Research at the Interface of the 
Biological and Mathematical Sciences.

\section*{Author Disclosure Statement}  No competing financial interests exist.

\ifthenelse{\boolean{submittedVersion}}{

}{
\bibliographystyle{plainnat}      
\bibliography{identifiability}
}

\end{document}